\newcommand{\defn}[1]{\emph{\textbf{{#1}}}}
\renewcommand{\paragraph}[1]{\vspace{.5 cm} \noindent \textbf{#1} }
\newcommand{\eo}{\mathsf{EvenOrOdd}}
\newcommand{\dist}{\mathsf{dist}}
\newcommand{\mc}{\mathcal}
\newcommand{\one}{\mathbbm{1}}
\DeclareMathOperator{\E}{\mathbb{E}}
\newcommand{\T}{\mathsf{T}}
\DeclareMathOperator{\Var}{\text{Var}}
\newcommand{\interior}[1]{ {\kern0pt#1}^{\mathrm{o}} }
\newcommand{\nd}{[\overline{n};d]}
\newcommand{\ind}{\mathbbm{1}}
\newcommand{\maj}{{\sf maj}}
\newcommand{\setof}[2]{\left\{ #1\; \mid \;#2 \right\}}
\newcommand{\set}[1]{\left\{ #1\right\}}
\newcommand{\F}{\mathbb{F}}
\newcommand{\N}{\mathbb{N}}
\newcommand{\ceil}[1]{\lceil #1 \rceil}
\newcommand{\ang}[1]{\langle #1 \rangle}
\newcounter{HALG@line}
\renewcommand{\theHALG@line}{\thealgorithm.\arabic{ALG@line}}
\crefname{equation}{}{} 
\crefname{enumi}{Step}{} 
\theoremstyle{definition}
\newtheorem{theorem}{Theorem}[section]
\newtheorem{lemma}[theorem]{Lemma}
\newtheorem{proposition}[theorem]{Proposition}
\newtheorem{corollary}[theorem]{Corollary}
\newtheorem{fact}[theorem]{Fact}
\newtheorem{test}[theorem]{Test}
\newtheorem{claim}[theorem]{Claim}
\title{New Direct Sum Tests}
\author{Alek Westover, Edward Yu, Kai Zheng}
\newcommand{\eps}{\varepsilon}
\begin{document}

\maketitle
\abstract{
A function $f:[n]^{d} \to \F_2$ is a \defn{direct sum} if there
are functions $L_i:[n]\to \F_2$ such that ${f(x) = \sum_{i}L_i(x_i)}$. In this work we give multiple results related to the property testing of direct sums. 

Our first result concerns a test proposed by Dinur and Golubev in \cite{di19}. We call their test the Diamond test and show that it is indeed a direct sum tester. More specifically, we show that if a function $f$ is $\eps$-far from being a direct sum function, then the Diamond test rejects $f$ with probability at least $\Omega_{n,\eps}(1)$. Even in the case of $n = 2$,
the Diamond test is, to the best of our knowledge, novel and yields a new tester for the classic property of affinity. 

Apart from the Diamond test, we also analyze a broad family of direct sum tests, which at a high level, run an arbitrary affinity test on the restriction of $f$ to a random hypercube inside of $[n]^d$. This family of tests includes the direct sum test analyzed in \cite{di19}, but does not include the Diamond test. As an application of our result, we obtain a direct sum test which works in the online adversary model of \cite{KRV}.

Finally, we also discuss a Fourier analytic interpretation of the diamond tester in the $n=2$ case, as well as prove local correction results for direct sum as conjectured by \cite{di19}.
}
\clearpage
\section{Introduction}
In property testing, one is given query access to a function over a large, typically multidimensional domain, say $f: \prod_{i=1}^d S_i \xrightarrow[]{} T$. The goal is to determine whether or not $f$ satisfies some property, $\mathcal{P}$, using as few queries to $f$ as possible. Here, a property $\mathcal{P}$ can be any subset of functions. It is not hard to see that distinguishing between $f \in \mathcal{P}$ and $f \notin \mathcal{P}$ requires querying the entire domain for any nontrivial property $\mathcal{P}$, and so we typically settle for an algorithm with a weaker guarantee, typically called a property tester. A tester for $\mc{P}$ is a randomized algorithm, which given input function $f$, makes queries to $f$ and satisfies the following two properties:

\vspace{0.1cm}

\begin{itemize}
    \item \textbf{Completeness}: If $f \in \mathcal{P}$, the tester accepts with probability $1$.
    \item \textbf{Soundness}: If $f$ is $\delta$-far from $\mathcal{P}$, the tester rejects with probability $s(\delta)$.
\end{itemize}

\vspace{0.1cm}

We say that $f$ is $\delta$-far from $\mathcal{P}$, if the minimal fractional hamming distance between $f$ and any $g \in \mc{P}$ is at least $\delta$. The function $s(\delta)$ is referred to as the soundness of the tester and oftentimes one repeats the tester in $s(\delta)^{-1}$ times to reject with probability $2/3$ in the soundness case. It is clear that for any property, there is a trivial algorithm that queries $f$ on its entire domain, so oftentimes the goal is to design testers whose query complexity is sublinear in the domain size or even independent of the dimension $d$.

In this paper we consider two property testing questions related to functions known as direct sums. A function $f$ is called a direct sum if it can be written as a sum of functions on the individual coordinates, that is
\[
f(x_1,\ldots, x_d) = \sum_{i=1}^d L_i(x_i),
\]
for some functions $L_i: S_i \xrightarrow[]{} T$. Direct sums are a natural property to consider in the context of property testing. Indeed, when one takes $f: \F_2^d \to \F_2$, then the direct sum property is equivalent to the classic property of affinity (being a linear function plus a constant) considered by Blum, Luby, and Rubinfeld's seminal work \cite{BLR}. Property testing of more general versions of direct sums, called low junta-degree functions, has also been considered before in the works of \cite{BSS, amireddy2023low}. A junta-degree $t$ function is one that can be written as the sum of $t$-juntas. Thus, direct sums are junta-degree $1$ functions, and direct sum testing is a specialization of the low junta-degree  testing problem that appears in \cite{BSS, amireddy2023low}. 

In addition to being a natural property to study, motivation for direct sum testing also comes from its relation to the more widely known direct product testing problem. Historically, direct product testing was first introduced by Goldreich and Safra in \cite{GS} due to its potential for application as a hardness amplification step in more efficient PCP constructions \cite{DR, D, DM, IKW, BMV}. More recently, the related problem of direct sum testing was considered, first by \cite{david2015direct} and later by \cite{di19}. The motivation for considering direct sums stems in part from the fact that direct sums are similar to direct products in that they can be used to amplify hardness, but with the advantage that the output is a single value, as opposed to an entire tuple. This seems to make direct sum testing more difficult, but has the potential for leading to even more efficient PCPs (in particular, in improving a parameter known as the alphabet size). Analysis of a type of direct sum test is a key piece of Chan's elegant PCP construction in \cite{Chan}.

\paragraph{Direct Sum Testing}
In \cite{di19}, Dinur and Golubev give and analyze a direct sum tester which they call the Square in a Cube test. They propose an additional test which they refer to as the diamond test, but leave its analysis for future work. We call this second test the Diamond Test, and describe both of these tests below.

Let the domain be an arbitrary grid $[\overline{n}; d] = \prod_{i=1}^d [n_i]$ where each $[n_i] = \{1, \ldots, n_i\}$. We will let the output be $\F_2$ so that our input function is $$f: \nd \xrightarrow[]{} \F_2.$$ Given two points $a, b \in [\overline{n};d]$ and $x \in \{0,1\}^d$, let \defn{interpolation} $\phi_x(a,b)$ to be the vector obtained by
taking $a_i$ whenever $x_i = 0$ and $b_i$ whenever $x_i = 1$. The Square in a Cube test proceeds as follows.

\begin{test}[Square in a Cube]
    Sample random $a, b \in [\overline{n};d]$ and $x,y \in \{0,1\}^d$. Accept if and only if
    \[
    f(a) + f(\phi_x(a, b)) + f(\phi_{y}(a,b)) + f(\phi_{x+y}(a,b)) = 0.
    \]
\end{test}

The Diamond Test proceeds in a slightly different manner, which essentially fixes $x+ y = (1,\ldots, 1)$.

\begin{test}[Diamond]\label{test:diamondcube}
Sample random $a,b\in \nd,x\in \{0,1\}^d$. Accept iff
$$f(a) + f(\phi_x(a,b))+f(\phi_x(b,a)) + f(b) = 0.$$
\end{test}

In addition to the Diamond test, we also consider a family of tests that generalizes the Square in a Cube test. We call this test the Affine on Subcube test. 

\begin{test}[Affine on Subcube]\label{test:affsubcube}
Sample $a,b\in [n]^d$ and run an affinity test on the function $x\mapsto f(\phi_x(a,b))$.
\end{test}

Note that the affinity test can be any arbitrary affinity test for functions $\F_2^d \to \F_2$. The Square in a Cube test is a special case of the above with the classic BLR affinity test as the affinity test. The Diamond Test, however, is not a specialization of the above.

As is usual in property testing, it is clear that both the Diamond and Affine on Subcube test satisfy completeness, so our main interest is with regards to their soundness.




\subsection{Main Results}
We now describe our main results. Our first result establishes soundness for the Diamond test. As the completeness is clear, this shows that the Diamond test is indeed a Direct Sum tester.

\begin{theorem} \label{3thm:diamond_cube_soundness}
Suppose $f: [n]^d \to \F_2$ passes the Diamond test with
probability $1-\eps$. Then, $f$ is $C_n\cdot \eps$-close to a direct sum, for some
constant $C_n$ independent of $d$.
\end{theorem}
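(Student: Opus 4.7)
The plan is to reduce to the Boolean case $n = 2$, where direct sums coincide with affine functions, and then lift the resulting ``affinity on each subcube'' information to a global direct sum.

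For the Boolean case, the substitution $b = a + z$, $u = xz$, $v = \overline{x}\, z$ (coordinatewise products in $\F_2$) rewrites the Diamond test as the $4$-point affinity check
\[
f(a) + f(a+u) + f(a+v) + f(a+u+v) = 0,
\]
with $a$ uniform in $\F_2^d$ and each coordinate of $(u, v)$ independently drawn from $\{(0,0),(0,1),(1,0)\}$ with probabilities $(1/2, 1/4, 1/4)$; crucially $u$ and $v$ always have disjoint support. Setting $g = (-1)^f$ and Fourier-expanding, I would rewrite $\Pr[\text{accept}] - \Pr[\text{reject}] = 1 - 2\eps$ as
\[
\sum_{\substack{A, B \subseteq [d]\\ A \cap B = \emptyset}} 2^{-(|A| + |B|)}\, Q(A, B), \qquad Q(A, B) \defeq \sum_T \hat g(T)\,\hat g(T+A)\,\hat g(T+B)\,\hat g(T+A+B),
\]
whose diagonal $A = B = \emptyset$ term is $\sum_T \hat g(T)^4$. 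The goal is to show the total concentrates on this diagonal: using Cauchy--Schwarz on the off-diagonal terms together with the moment identity $\sum_{A,B} Q(A,B) = \E_a[g(a)^4] = 1$ should give $\sum_T \hat g(T)^4 \ge 1 - O(\eps)$, after which a standard $\ell^4$ vs. $\ell^2$ argument (Parseval) yields a single $T^\star$ with $\hat g(T^\star)^2 \ge 1 - O(\eps)$, so $f$ is $O(\eps)$-close to the affine function $\chi_{T^\star}$.

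For general $n$, define $g_{a,b} \colon \{0,1\}^d \to \F_2$ by $g_{a,b}(x) = f(\phi_x(a,b))$; the full Diamond test on $f$ is exactly the Boolean Diamond test applied to $g_{a,b}$ for random $(a,b)$. Let $\eps_{a,b}$ be the conditional rejection probability; by Markov, on a $1 - \sqrt\eps$ fraction of pairs we have $\eps_{a,b} \le \sqrt\eps$, so by the Boolean case there is an affine $\alpha_{a,b}(x) = c_{a,b} + \sum_i \lambda_i(a,b)\, x_i$ with $g_{a,b}$ being $O(\sqrt\eps)$-close to $\alpha_{a,b}$. For a true direct sum $D(y) = \sum_i M_i(y_i)$, one computes $\lambda_i(a,b) = M_i(a_i) + M_i(b_i)$, depending only on $(a_i, b_i)$. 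The final step is a decoding argument: define $\mu_i(s, t)$ to be the plurality value of $\lambda_i(a,b)$ over pairs with $a_i = s$, $b_i = t$; show $\mu_i$ is well-defined by comparing two subcubes that share this pair of $i$-th coordinates (via their common ``$e_i$-derivative''); verify the cocycle condition $\mu_i(s,t) + \mu_i(t,r) = \mu_i(s,r)$ on most triples so that $\mu_i(s, t) = M_i(s) + M_i(t)$ for some $M_i \colon [n] \to \F_2$; and finally conclude, via the agreement of $f(a) = g_{a,b}(0)$ and $\alpha_{a,b}(0)$, that $f$ is $C_n \eps$-close to $D = \sum_i M_i$.

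The main obstacle is the last stitching step, a direct-product-style decoding that must extract a globally consistent $M_i$ per coordinate from the locally affine restrictions while propagating the $O(\sqrt\eps)$ local error into a $C_n\, \eps$ global bound; the constant $C_n$ will absorb the cost of majority voting over $[n]^2$. The Boolean Fourier step is also delicate, since the non-uniform, disjoint-support distribution of $(u,v)$ prevents any direct appeal to the usual BLR or Gowers $U^2$ soundness analyses, and the off-diagonal $Q(A,B)$ terms have to be grouped and bounded by hand.
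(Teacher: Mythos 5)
There is a genuine gap in the reduction from general $n$ to the Boolean case, and it is precisely the obstacle the paper singles out. You write that ``the full Diamond test on $f$ is exactly the Boolean Diamond test applied to $g_{a,b}$ for random $(a,b)$,'' but this is not so. Fix $a,b\in[n]^d$ and look at what the Diamond test actually queries on the subcube $\prod_i\{a_i,b_i\}$: the four points are $a,\phi_x(a,b),\phi_x(b,a),b$, which in subcube coordinates are $0,\;x,\;\bar x,\;\one$. Only $x$ is random; the two endpoints $0$ and $\one$ are \emph{anchored}. This anchored test is strictly weaker than the $n=2$ Diamond test you analyze (which additionally randomizes the base point $a$ and difference $z=a+b$ within $\F_2^d$), and in fact it is not an affinity tester at all: if $g_{a,b}(0)+g_{a,b}(\one)=0$ (resp.\ $1$), then every function invariant (resp.\ anti-invariant) under $x\mapsto x+\one$ passes with probability $1$. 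Averaging over $(a,b)$ does not recover the Boolean Diamond test either, since re-randomizing the endpoints $\tilde a,\tilde b\in\F_2^d$ replaces $(a,b)$ with $(a',b')=(\phi_{\tilde a}(a,b),\phi_{\tilde b}(a,b))$, whose per-coordinate agreement probability is $\tfrac12+\tfrac1{2n}$ rather than $\tfrac1n$ --- a different distribution and therefore a different test. So the piece of your argument that is supposed to hand the Boolean soundness to the general case does not connect to the Diamond test as stated; this is exactly the ``even-or-odd, not affine'' obstruction the paper points to as the reason a subcube-restriction reduction fails.

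Two further issues. First, the Boolean Fourier step: your expansion
\[
1-2\eps=\sum_{A\cap B=\emptyset}2^{-(|A|+|B|)}Q(A,B)
\]
is correct, but the off-diagonal control is not as routine as the sketch suggests. The natural Cauchy--Schwarz bounds $|Q(A,B)|\le\min\bigl(Q(\emptyset,A),Q(\emptyset,B)\bigr)$ give, when summed against the weights $2^{-(|A|+|B|)}$, a factor like $\sum_{A\neq\emptyset}2^{-|A|}$ which is $(3/2)^d-1$ and diverges with $d$; the moment identity $\sum_{A,B}Q(A,B)=1$ provides cancellation in the aggregate but does not bound any individual signed term. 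Something stronger is needed here --- the paper invokes a hypercontractivity inequality to handle the analogous small-$\eps$ case. Second, the stitching step as you set it up takes a Markov threshold at $\sqrt\eps$, so the typical subcube only gets you an $O(\sqrt\eps)$-close local affine approximation; you then assert a final bound of $C_n\eps$ without explaining how the square root is removed, and you yourself flag this as the main obstacle. By contrast the paper's argument is inductive: it reinterprets the test one dimension down via a four-function variant (following the $n=2$ approach of David et al.), handles the small-distance regime with hypercontractivity over grids, and handles the large-distance regime with agreement-testing machinery (a transitive-consistency graph and a hyperplane sampling lemma), rather than by a global Fourier identity plus a subcube-to-Boolean reduction.
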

We remark that the theorem above incurs a dependence on $n$. In contrast, \cite{di19} shows a version of the above result with $C_n$ replaced by an absolute constant, $C$, independent of $n$. For $n = O(1)$, the soundness we show for the Diamond Test is comparable to what is known for the Square in a Cube test, but for larger $n$ the soundness analysis becomes weaker. This dependence on $n$ in the soundness is not so uncommon in property testing results however. It is comparable to the dependence on field size in the low degree testing results of \cite{alon_lowdegree, kaufman2008algebraic}, or on grid size in the junta degree testing results of \cite{BSS, amireddy2023low}. We leave it to future work to remove the dependence on $n$ in the soundness.

Our next result is a reformulation of \cref{3thm:diamond_cube_soundness} in the $n = 2$ case. Here the Diamond Test is, to the best of our knowledge, a novel affinity tester. Moreover, we  observe that the soundness of the $n=2$ Diamond Test is equivalent to a succinct Fourier analytic fact regarding a functions distance to affinity, or equivalently its maximum magnitude Fourier coefficient. Interestingly, we were unable to show this fact without appealing to the soundness of the Diamond test.
\begin{theorem}
For $f: \{-1,1\}^d \xrightarrow[]{} \{-1,1\}$,
$$1-\frac{\max_{S \in [d]}|\hat{f}(S)|}{2} \le O(\E_{\mc{R}}[\dist(\mathcal{R}(f), \eo)]).  $$
where $\mc{R}$ is the restriction of $f$ to some randomly chosen subcube and $\eo$ is the set of functions that are either even or odd.
\end{theorem}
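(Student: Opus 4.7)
The plan is to recognize the LHS as the distance from $f$ to the class of affine functions, and then to bound this distance using the soundness of the Diamond test (\cref{3thm:diamond_cube_soundness} with $n=2$). Affine functions $\{-1,1\}^d \to \{-1,1\}$ are exactly the signed characters $\pm \chi_S$, so a direct Fourier computation gives $\dist(f, \text{affine}) = \tfrac{1 - \max_S |\hat f(S)|}{2}$ (which I believe is the intended LHS). Next, let $\mathcal{R}(f) = g_{a,b}$ denote the restriction $g_{a,b}(z) = f(\phi_z(a,b))$ for uniformly random $a,b \in \F_2^d$, viewed as a function on $\{0,1\}^d$. I would define $h_{a,b}(z) = g_{a,b}(z) + g_{a,b}(z+1)$ and $p_{a,b} = \Pr_z[h_{a,b}(z) = 1]$; a short calculation shows $g_{a,b} \in \eo$ iff $h_{a,b}$ is constant, and more generally $\dist(g_{a,b}, \eo) = \min(p_{a,b}, 1-p_{a,b})/2$.

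The crucial step is to show that the Diamond test's reject probability equals $2\E_{a,b}[p_{a,b}(1-p_{a,b})]$. Unpacking the test, it rejects exactly when $h_{a,b}(x) \neq h_{a,b}(0)$ for the sampled $x$, so naively this depends on the distinguished value $h_{a,b}(0)$. To symmetrize, I would use the following observation: for any fixed $x_0 \in \{0,1\}^d$, the map $(a,b) \mapsto (a',b') := (\phi_{x_0}(a,b), \phi_{x_0}(b,a))$ is a measure-preserving bijection on $(\F_2^d)^2$ (it just swaps $a_i \leftrightarrow b_i$ on the coordinates where $(x_0)_i = 1$), and under this reparameterization $\phi_z(a',b') = \phi_{z+x_0}(a,b)$, so $h_{a',b'}(z) = h_{a,b}(z + x_0)$. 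Hence the Diamond reject probability is unchanged if we replace the distinguished $0$ by any fixed $x_0$, and therefore also by $x_0$ averaged uniformly over $\{0,1\}^d$, giving $\text{reject} = \E_{a,b}[\Pr_{z,z'}[h_{a,b}(z) \neq h_{a,b}(z')]] = 2\E_{a,b}[p_{a,b}(1-p_{a,b})]$.

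To finish, apply the elementary inequality $p(1-p) \leq \min(p, 1-p)$ to get $\text{reject} \leq 2\E_{a,b}[\min(p_{a,b}, 1-p_{a,b})] = 4\E_{a,b}[\dist(g_{a,b}, \eo)]$, and combine with \cref{3thm:diamond_cube_soundness} to conclude $\dist(f, \text{affine}) \leq C_2 \cdot \text{reject} = O(\E_\mathcal{R}[\dist(\mathcal{R}(f), \eo)])$. The main obstacle is the reparameterization symmetry in the second paragraph: without it, one is stuck, because the naive bound $\text{reject}_{a,b} \geq 2\dist(g_{a,b}, \eo)$ goes in the \emph{wrong} direction, and indeed $\text{reject}_{a,b}$ can be close to $1$ even when $\dist(g_{a,b}, \eo)$ is close to $0$ (namely when $h(0)$ happens to be the minority value of $h$). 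The symmetry argument is what lets us average away this dependence on the arbitrary basepoint.
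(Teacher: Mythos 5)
Your proposal is correct and follows the same high-level route as the paper: the theorem is the soundness of the Diamond test in the $n=2$ case, translated into Fourier language, so one identifies the left-hand side with $\dist(f,\text{affine})$ and shows the Diamond rejection probability is $O(\E_{a,b}[\dist(g_{a,b},\eo)])$, then invokes \cref{3thm:diamond_cube_soundness}. You correctly note the typo in the displayed statement (the intended left-hand side is $\tfrac{1-\max_S|\hat f(S)|}{2}$). The step you flag as the ``main obstacle'' is real and your symmetrization handles it cleanly: since for fixed $(a,b)$ the per-restriction rejection probability is $p_{a,b}$ or $1-p_{a,b}$ depending on the value of $h_{a,b}(0)$, one cannot compare it pointwise to $\min(p_{a,b},1-p_{a,b})$; the measure-preserving reparameterization $(a,b)\mapsto(\phi_{x_0}(a,b),\phi_{x_0}(b,a))$, which satisfies $h_{\phi_{x_0}(a,b),\phi_{x_0}(b,a)}(z)=h_{a,b}(z+x_0)$, lets you replace the distinguished basepoint $0$ by a uniformly random one, giving $\text{reject}=2\E_{a,b}[p_{a,b}(1-p_{a,b})]\le 4\E_{a,b}[\dist(g_{a,b},\eo)]$. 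One small point worth making explicit: $g_{a,b}$ is constant on coordinates where $a_i=b_i$, so ``the restriction to the subcube'' lives on a lower-dimensional cube, but since $h_{a,b}$ also only depends on those coordinates, $p_{a,b}$ and hence $\min(p_{a,b},1-p_{a,b})/2$ agree with the distance-to-$\eo$ of the genuinely restricted function, so there is no gap.
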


Along with the Diamond Test, we also describe and analyze a family of testers for direct sum which we call the Affine on Subcube test (\cref{test:affsubcube}).

\begin{theorem}
Fix $f:[n]^d\to \F_2, \eps\ge 0$ and an affinity test $T$. If $f$
passes the Affine on Subcube test (instantiated with $T$) with
probability $1-\eps$, then $f$ is $C\eps$-close to a direct sum, for some
constant $C$ independent of $n, d$.
\end{theorem}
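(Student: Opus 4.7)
The plan is to reduce to the already-analyzed Square in a Cube test and then invoke its soundness theorem from \cite{di19}. Recall that Square in a Cube is exactly the Affine on Subcube test in which the underlying affinity test is the 4-point check $g(0)+g(x)+g(y)+g(x+y)=0$ for random $x,y\in\{0,1\}^d$.

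For each $(a,b)\in [n]^d\times[n]^d$, write $g_{a,b}:\F_2^d\to\F_2$ for the restriction $g_{a,b}(x)=f(\phi_x(a,b))$, and let $\eta_{a,b}$ denote the probability that $T$ rejects $g_{a,b}$. By hypothesis $\E_{a,b}[\eta_{a,b}]=\eps$. Since $T$ is a valid affinity tester, its soundness yields a constant $C_T$ depending only on $T$ such that $g_{a,b}$ is $C_T\eta_{a,b}$-close to some affine function $\ell_{a,b}$.

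The key step is to bound the rejection probability of Square in a Cube on $f$ by passing to the following ``offset'' reformulation: sample $(a,b,w,x,y)$ uniformly with $a,b\in[n]^d$ and $w,x,y\in\{0,1\}^d$, and check
\[
f(\phi_w(a,b))+f(\phi_{w+x}(a,b))+f(\phi_{w+y}(a,b))+f(\phi_{w+x+y}(a,b))=0.
\]
The substitution $(a',b')=(\phi_w(a,b),\phi_w(b,a))$ is measure-preserving on $(a,b,w)$ and satisfies $\phi_s(a',b')=\phi_{w+s}(a,b)$ for all $s\in\{0,1\}^d$, so this offset form has exactly the same rejection probability as the original Square in a Cube test. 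Conditioning on $(a,b)$, the four queried positions $w,w+x,w+y,w+x+y$ are each marginally uniform in $\{0,1\}^d$. Writing $h_{a,b}=g_{a,b}+\ell_{a,b}$ (which has density at most $C_T\eta_{a,b}$) and noting that the affine part $\ell_{a,b}$ cancels from the test, a union bound over the four queries gives
\[
\Pr_{w,x,y}[\text{offset test rejects}\mid a,b]\le 4C_T\eta_{a,b}.
\]
Averaging over $(a,b)$, the Square in a Cube test rejects $f$ with probability at most $4C_T\eps$, and the Square in a Cube soundness theorem of \cite{di19} then gives that $f$ is $O(\eps)$-close to a direct sum, with constant independent of $n$ and $d$.

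The main subtlety is quantifying the soundness of $T$: for the final bound to be linear in $\eps$ we need $\dist(g,\text{affine})\le C_T\cdot\Pr[T\text{ rejects }g]$. This holds for every standard affinity tester (in particular for BLR and its close variants), so the reduction loses only a multiplicative constant depending on $T$. The offset reformulation is what lets us avoid the awkward issue that, for the original Square in a Cube test and a bad pair $(a,b)$, the closest affine function might disagree with $g_{a,b}$ at $0$ (making the conditional rejection probability large); averaging the offset version over $(a,b)$ agrees with averaging the original, so this worry washes out.
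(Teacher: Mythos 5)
Your reduction from a general affinity tester to Square in a Cube is mathematically clean: the offset reparameterization $(a',b')=(\phi_w(a,b),\phi_w(b,a))$ is indeed measure-preserving, the identity $\phi_s(a',b')=\phi_{w+s}(a,b)$ is correct, the affine part cancels over four points of the form $w,w+x,w+y,w+x+y$, and each of these four points is marginally uniform, so the union bound $4C_T\eta_{a,b}$ holds. The reduction correctly shows that if $f$ passes Affine on Subcube with probability $1-\eps$ under any affinity tester with soundness $s(\delta)=\Omega(\delta)$, then $f$ passes Square in a Cube with probability $1-O(\eps)$.

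However, there is a genuine gap in treating this as a complete proof: the paper explicitly states that the soundness proof of Square in a Cube given in \cite{di19} contains an error --- it establishes that $f$ is \emph{locally} close to a direct sum on many random subcubes, but does not show how to stitch these local direct sums into a single global direct sum close to $f$. The direct argument the paper gives for \cref{thm:subcube} is, in part, the fix for that missing step. So your reduction lands on a statement whose only prior published proof is incomplete; as written, the argument is not self-contained and would still need the local-to-global step to be supplied.

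This is also why the paper's route looks quite different from yours. Instead of reducing one test to another, the paper works directly: for each $a$, it extracts from the inner tester a linear part $F^a(b)\in\F_2^d$ of the best affine approximant to $x\mapsto f(a)+f(\phi_x(a,b))$, proves (\cref{lem:fabdirectprod1}) that $b\mapsto F^a(b)$ passes a direct \emph{product} test and is therefore close to a genuine direct product $(g_1(a,\cdot),\ldots,g_d(a,\cdot))$, and then uses the symmetry $\phi_x(a,b)=\phi_{x+\one}(b,a)$ to force global consistency and exhibit an explicit direct sum $a\mapsto f(\beta)+\sum_i g_i(\beta,a_i)$ close to $f$. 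Your reduction buys brevity and modularity (any inner affinity tester, one black-box call to a downstream theorem); the paper's direct argument buys completeness, because it actually supplies the local-to-global gluing that the downstream theorem was missing.
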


By using a suitable erasure-resilient tester as the inner affinity tester in \cref{test:affsubcube}, we use \cref{thm:subcube} obtain direct sum testers in the online-erasure model recently introduced by \cite{KRV}. We discuss this result as well as the online-erasure model in more detail in \cref{sec:erasure}.

Finally, we address a second question posed in \cite{di19} regarding reconstructing functions which are close to
direct sums. They ask if a test that they call the Shapka test can be used to reconstruct, via a ``voting scheme'', a direct sum given query access to its corrupted version. We show that this is indeed the case, and also give an improved reconstruction method that uses fewer queries.
\begin{proposition}
Let $f: [n]^d \to \F_2$ be $\eps$ close to a direct sum $L$,
with $(n+1)\eps < 1/4$. Then $L$ can be reconstructed using a voting scheme on $f$ using $n$ queries to $f$.
\end{proposition}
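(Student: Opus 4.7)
The plan is to design a voting scheme that evaluates $L(x)$ for any input $x \in [n]^d$ using $n$ queries to $f$, and to argue correctness via a union bound. The key idea is to express $L(x)$ as an $\F_2$-linear combination of $L$-values at $n$ query points $y^{(1)},\ldots,y^{(n)} \in [n]^d$, each marginally uniformly distributed on $[n]^d$. Marginal uniformity ensures that each $f(y^{(j)})$ agrees with $L(y^{(j)})$ except with probability at most $\eps$, and the identity $L(x) = \sum_{j=1}^n L(y^{(j)})$ in $\F_2$ (possibly up to a small correction) lets us recover $L(x)$ from the $f$-queries.

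First, I would sample a uniformly random auxiliary point $b \in [n]^d$ together with additional randomness such as a uniformly random permutation of $[n]$. The points $y^{(1)},\ldots,y^{(n)}$ would be defined as deterministic functions of $x$, $b$, and this randomness, engineered so that: (i) each $y^{(j)}$ has the uniform marginal distribution on $[n]^d$; and (ii) the identity $\sum_{j=1}^n L(y^{(j)}) = L(x)$ holds in $\F_2$ for every direct sum $L$, possibly up to a correction term equal to $L(b)$, which can be recovered via a single extra $f$-query. Identity (ii) would be verified by expanding $L(y) = \sum_i L_i(y_i)$ and checking the coordinate-wise contribution, exploiting the combinatorial structure of the construction (e.g.\ that for each coordinate $i$ exactly one $y^{(j)}_i$ equals $x_i$ modulo known shifts).

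The voting scheme then queries $f$ at $y^{(1)},\ldots,y^{(n)}$ (and at $b$ if the correction is needed), and outputs the corresponding $\F_2$-sum. A union bound over the $n$ queries gives failure probability at most $n\eps$; including the contribution from $b$ when applicable yields total failure probability at most $(n+1)\eps < 1/4$. On the complementary event the output equals $L(x)$, so the scheme is correct with probability strictly greater than $3/4$, which standard repetition amplifies to arbitrarily high confidence and gives the promised ``voting'' reconstruction of $L$.

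The hard part is the combinatorial construction of $y^{(1)},\ldots,y^{(n)}$ satisfying (i) and (ii) simultaneously: making the $n$ query points each marginally uniform on $[n]^d$ while together XORing to $L(x)$ for every direct sum $L$. This is where the direct sum structure is used in a nontrivial way, and where the Shapka test of \cite{di19} enters the picture, now adapted to a reconstruction rather than a testing setting; the improvement over the naive reconstruction comes from bundling all $d$ coordinates into $n$ correlated random samples rather than spending one query per coordinate.
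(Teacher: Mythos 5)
Your high-level plan is the right one and matches the paper's in spirit: construct $n$ (or $n+1$) query points that are each marginally uniform on $[n]^d$, prove an exact $\F_2$-identity $\sum_j L(y^{(j)}) = L(x)$ for every direct sum $L$, and close with a union bound giving failure probability at most $n\eps$ or $(n+1)\eps < 1/4$. That scaffolding is sound.

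However, you explicitly defer the core content --- ``the hard part is the combinatorial construction of $y^{(1)},\ldots,y^{(n)}$ satisfying (i) and (ii) simultaneously'' --- and this is precisely the nontrivial idea the proposition is asking for, so the proof is incomplete as written. The missing construction in the paper is: randomly partition $[d]$ into $n$ parts $S_1,\ldots,S_n$ (each coordinate assigned independently and uniformly), draw a single common random point $R$ with $R_j$ uniform over $[n]\setminus\{x_j\}$, and set $x^{(i)}_j = x_j$ if $j \in S_i$ and $x^{(i)}_j = R_j$ otherwise. Your hint that ``for each coordinate $i$ exactly one $y^{(j)}_i$ equals $x_i$'' is in the right direction, but you don't say what the remaining $n-1$ entries in coordinate $j$ should be, and that is where the argument actually lives: those $n-1$ entries must all be the \emph{same} value $R_j$, so that coordinate-by-coordinate $\sum_i L_j(x^{(i)}_j) = L_j(x_j) + (n-1)L_j(R_j)$, and the second term vanishes in $\F_2$ when $n$ is odd (hence the $n+1$-query variant for even $n$). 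If you instead took the non-designated entries to be independent uniform draws --- which is the most natural reading of ``marginally uniform'' without further constraints --- the identity fails, since the $L_j(\cdot)$ terms would not cancel. The random permutation of $[n]$ and the correction term $L(b)$ you propose are not part of the construction and don't supply the needed cancellation.

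Also worth noting: marginal uniformity of each $x^{(i)}$ is what makes the union bound legitimate, and in the above construction it follows because for each coordinate $j$, $x^{(i)}_j = x_j$ with probability $1/n$ (when $j\in S_i$) and otherwise $x^{(i)}_j = R_j$ which is uniform over the other $n-1$ values, so $x^{(i)}_j$ is uniform on $[n]$, independently across $j$. You should make this explicit rather than merely stipulating (i) as a requirement.
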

 We also give lower bounds on the query complexity needed and show that our improved method is asymptotically optimal (up to constant factors) in terms of
both query complexity and the fraction of corruptions it can tolerate.

\paragraph{Proof Overview}

We give an overview of the proof of \cref{3thm:diamond_cube_soundness}. Our analysis here diverges significantly from that of Dinur and Golubev for the Square in a Cube Test. Before discussing our analysis, we first summarize their approach and where it fails for the Diamond Test.

The underlying idea behind the Square in a Cube Test is that after choosing two points $a, b \in [n]^d$, any direct sum restricted to the hypercube-like domain $\prod_{i=1}^d\{a_i, b_i\}$ must be an affine function (i.e\ multivariate polynomial of degree $1$). This idea gives way to a natural interpretation of the Square in a Cube Test

\begin{itemize}
    \item Choose two points $a, b \in [n]^d$ randomly and view the domain $\prod_{i=1}^d\{a_i, b_i\}$ as a hypercube of the appropriate dimension $d'$. Here $d'$ is the number of coordinates on which $a$ and $b$ differ.
    \item Define the function $f_{a,b}: \{0,1\}^{d'} \xrightarrow[]{} \F_2$ by $f_{a,b}(x) = f(\phi_x(a,b))$.
    \item Perform the BLR affinity test on $f_{a,b}$. That is, choose $x,y,z \in \{0,1\}^{d'}$ and accept if and only if 
    \[
    f_{a,b}(x) + f_{a,b}(x+y) + f_{a,b}(x+z) + f_{a,b}(x+y+z) = 0.
    \]
\end{itemize}
From here, the analysis of \cite{di19} has two parts. First, they apply BLR to show that if the overall Diamond in a Cube test passes with high probability, then for many $a, b$, the function $f_{a,b}$ is close to an affine function (or direct sum), say $F_{a,b}$, on the hypercube $\prod_{i=1}^d\{a_i, b_i\}$. These $F_{a,b}$ can be thought of as local direct sums that agree with $f$ locally on hypercubes inside of $[n]^d$. The second step is to show that many of these $F_{a,b}$ are actually consistent with eachother and apply a direct product testing result of \cite{dinur2014direct} to conclude that there is in fact a global direct sum $F$ consistent with many $F_{a,b}$'s. As the $F_{a,b}$'s are in turn largely consistent with $f$, we get that $F$ is a direct sum close to $f$, concluding the proof of soundness. 

Our analysis of the soundness of the Diamond in the Cube test, however, requires a different approach. Let us first see what happens when we try to naively adapt the above strategy. After choosing $a, b \in [n]^d$, instead of applying BLR, the Diamond in the Cube test performs the following test on $f_{a,b}: \{0,1\}^{d'} \xrightarrow{} \F_2$. 
\[
\text{Choose $x \in \{0,1\}^{d'}$ and accept if and only if $f_{a,b}(0) + f_{a,b}(x) + f_{a,b}(x+1) + f_{a,b}(1) = 0$.}
\]
There is an issue though: the above is not an affinity tester! Indeed one can check that if $f_{a,b}(0) + f_{a,b}(1) = 0$, resp.\ $1$, then any even, resp.\ odd, function passes the above test with probability $1$. Thus, we would only get that many $f_{a,b}$ are close to even or odd functions, and from here it is not clear how to make the second part of the analysis go through. 

We instead combine ideas from the soundness analyses of \cite{david2015direct, optimalreedmuller, amireddy2023low}, which use induction. Suppose $f: [n]^d \xrightarrow[]{} \F_2$ is $\eps$-far from a direct sum. We wish to show that the Diamond Test rejects $f$ with some non-trivial probability that is independent of $d$. The inductive approach of both of the mentioned works consist of three main steps: 1) Restate the test on $f$ as choosing a random subdomain that is one dimension and performing the test on $f$ restricted to this subdomain, 
2) analyze the soundness in the $\eps$ very small case, 3) on the other hand, show that if $\eps$ is large then show that $f$ must also be far from a direct sum on many subdomains. 

To execute all of the above steps, we have to draw on ideas from all of the mentioned works. For step 1), we 
take inspiration from \cite{david2015direct} and instead analyze a four-function version of the Diamond Test. The reason is that there is no clear way to view the Diamond Test in the manner described above. It is tempting to describe the Diamond Test as choosing a random coordinate, say $i = 1$, and a random value in $[n]$, say $1$, and then performing a $d-1$-dimensional version of the Diamond Test on the function $f(1, x_2, \ldots, x_d)$. Unfortunately this does not work as one has to change the distribution that the points in the $d-1$-dimensional grid are chosen. Indeed, if one chooses the points $a,b \in [n]^{d-1}$ of the one dimension down Diamond Test uniformly at random, then the final $d$-dimensional points output would agree on $1 + (d-1)/n$ coordinates in expectation instead of $d/n$. 

For the small distance case in step 2), we rely on the a hypercontractivity theorem over grids. However, since we are now working with a four-function version of the Diamond Test some additional care is needed in this analysis.

Finally for step 3), we use techniques from agreement testing in the PCP literature \cite{RS, MZpcp}. From \cite{RS} we borrow their transitive-consistency graph technique used to analyze the so-called Plane versus Plane test and from \cite{MZpcp} we show a version of their hyperplane sampling lemma for $(d-1)$-dimensional grids in $[n]^d$.

\paragraph{Paper Outline}
In \cref{sec:diamond} we present and analyze our novel affinity tester (the Diamond test).
In \cref{sec:subcube} we give a new class of direct sum tests. In \cref{sec:correction} we consider the problem of obtaining ``corrected'' samples from corrupted direct sums.

\section{Preliminaries} \label{sec:prelims}
We now introduce some notations that will be used throughout the paper. We let $[n]$ denote $\set{1,2,\ldots, n}$. Let $[n;d]$ denote a product space $[n_1]\times \cdots \times [n_d]$.
For a set or event $X$, we write $\overline{X}$  to denote the complement of $X$.

The notion of distance that we use is fractional hamming distance. That is, given two functions $f: S \xrightarrow[]{} T$ and $g: S \xrightarrow[]{} T$, the distance between them is the fraction of entries on which they differ:
\[
\dist(f, g) = \Pr_{x \in S}[f(x) \neq g(x)].
\]
Given a property $\mathcal{P} \subseteq \{g: S \xrightarrow[]{} T\}$, the distance from $f$ to the property $\mc{P}$ is defined as
\[
\dist(f, \mc{P}) = \min_{g \in \mathcal{P}} \dist(f, g).
\]
We say that $f$ is \emph{$\delta$-far} from $\mathcal{P}$ if $\dist(f, \mc{P}) \geq \delta$.

Finally, if we do not explicitly specify the distribution a random variable is drawn from
is a probability, then the random variable is meant to be drawn uniformly from
the appropriate space (which will be clear from context).
The constants hidden in our $O$-notation are never allowed to depend on $d$.
If there is a parameter $k$ that we think of as constant, we will sometimes use
the notation $O_k(\cdot)$ to emphasize that the constant hidden by the $O$ is
allowed to depend on $k$.

\paragraph{Boolean Functions}
Define the \defn{interpolation} $\phi_x(a,b)$ to be the vector obtained by
taking $a_i$ whenever $x_i = 0$ and $b_i$ whenever $x_i = 1$. 
A frequently useful property of interpolations is:
\begin{fact}\label{fact:fliperoo} 
$ \phi_x(a,b)=\phi_{x+\one}(b,a). $
\end{fact}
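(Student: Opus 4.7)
The plan is simply to verify the identity coordinate by coordinate, since both sides are vectors in $[\overline{n};d]$ and $\phi$ is defined componentwise. Fix an index $i\in [d]$ and split on the value of $x_i\in\{0,1\}$.

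If $x_i=0$, then by definition $\phi_x(a,b)_i = a_i$. On the right-hand side, $(x+\one)_i = 1$, and since the first and second arguments have been swapped to $(b,a)$, the rule ``take the second argument when the bit is $1$'' gives $\phi_{x+\one}(b,a)_i = a_i$ as well. If instead $x_i=1$, then $\phi_x(a,b)_i = b_i$, while $(x+\one)_i = 0$, so $\phi_{x+\one}(b,a)_i$ picks the first argument of $(b,a)$, namely $b_i$. In either case the $i$-th coordinates agree, so the vectors agree.

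There is no real obstacle here; the statement is essentially a tautology once one unpacks the definition of $\phi$, and the addition $x+\one$ in $\F_2^d$ is just coordinatewise bit-flipping, which exactly compensates for swapping the roles of $a$ and $b$. I would present the proof as a one-line case analysis, or equivalently observe that $\phi_x(a,b)_i$ depends only on the pair $(x_i, a_i, b_i)$ via the map $(0,a_i,b_i)\mapsto a_i$, $(1,a_i,b_i)\mapsto b_i$, and that this map is invariant under the simultaneous substitution $x_i \leftarrow 1-x_i$, $(a_i,b_i)\leftarrow (b_i,a_i)$.
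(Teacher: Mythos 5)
Your proof is correct and is the standard, essentially only, argument: unpack $\phi$ coordinatewise and observe that flipping $x_i$ exactly compensates for swapping $a$ and $b$. The paper states this as a Fact without proof, treating it as immediate from the definition, which your write-up confirms.
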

Given $x\in \prod_i S_i$ and $p\in [0,1]$, let $y\sim \T_p(x)$ denote a vector in $\prod_{i}
S_i$, with each $y_i$ sampled independently as follows: set $y_i = x_i$ with
probability $p$, and otherwise sample $y_i$ uniformly from $S_i$; $\T_p$ is
called the \defn{noise operator}. Given $p\in [-1,0]$ and $x\in \F_2^{d}$, we
define the distribution $y\sim \T_p(x)$ as: $y_i=x_i+1$ with probability $-p$
and $y_i$ is sampled uniformly from $\set{0,1}$ otherwise.
For a boolean function $f:\F_2^d\to \F_2$ we will use $\mu(f)$ to denote the
fractional size of the support of $f$ (i.e., the fraction of $\F_2^d$ on which
$f$ outputs $1$). We use $\one$ to denote the all ones vector, and sometimes use
$0$ to denote the zero-vector (the dimension will be clear from context). For
$v,w\in \F_2^d$ we write $\ang{v,w}$ to denote $\sum_i v_i w_i.$ For boolean
functions $f,g:\F_2^d\to \F_2$ we write $\ang{f,g}$ to denote $\E_x[f(x)g(x)]$.

When discussing functions $f:[n]^d\to \F_2$ we will always assume $n>1$.
\section{The Diamond Test}\label{sec:diamond}
We restate the Diamond test, initially proposed by Dinur and Golubev in \cite{di19} (recall from \cref{sec:prelims} that $\phi_x$ is the interpolation function)
\begin{test}[Diamond]\label{3test:diamondcube}
Sample random $a,b\sim [n]^d,x\sim \F_2^d$. Accept iff
$$f(a)+f(b)=f(\phi_x(a,b))+f(\phi_x(b,a)).$$
\end{test}
We will show that \cref{3test:diamondcube} is a valid tester for direct sum. 
First, we analyze the case that the test passes with probability $1$. 
\begin{lemma}\label{3lem:basecase}
The function
$f: [n]^d \to \F_2$ is a direct sum if and only if $f$ passes the Diamond test with probability $1$.
\end{lemma}
\begin{proof}
If $f$ is a direct sum then we can write $f(x) = \sum_i L_i(x_i)$. It follows that for any $a,b \in [n]^d$,
$$f(a)+f(b)  = \sum_i (L_i(a_i)+L_i(b_i)) = f(\phi_x(a,b)) + f(\phi_x(b,a)).$$
For the other direction, suppose $f$ passes the Diamond test with probability
$1$. Then, for all $x\in \F_2^d, a\in [n]^d$ we have
\begin{equation}\label{3eq:breakitup}
f(a) = f(\one)+f(\phi_x(\one,a)) + f(\phi_x(a,\one)).
\end{equation}
Let $e_{i,j}\in [n]^d$ denote a vector which is
$1$ at all coordinates except coordinate $i$, where it has value $j$.
Then, repeated application of \cref{3eq:breakitup} lets us decompose $f(a)$ as 
$$f(a) = f(\one)+\sum_{i=1}^d \left(f(e_{i,a_i})+f(\one)\right).$$
Thus, $f$ is a direct sum.
\end{proof}
As a simple corollary of \cref{3lem:basecase} we have:
\begin{corollary}
If $f$ is $\eps$-close to a direct sum, then $f$ passes the Diamond test with probability at least $1-4\eps$.
\end{corollary}
\begin{proof}
Let $g$ be a direct sum with $\dist(f,g)\le \eps$. By a union bound, $f$ and $g$ agree
on all $4$  points queried by the test (because each query is marginally
uniformly distributed), in which case the Diamond  test accepts.
\end{proof}

With completeness done, the remainder of this section is focused on showing the Diamond test's soundness as stated in \cref{3thm:diamond_cube_soundness}. We restate the theorem below.

\begin{theorem} 
Suppose $f: [n]^d \to \F_2$ passes the Diamond test with
probability $1-\eps$. Then, $f$ is $C_n\eps$-close to a direct sum, for some
constant $C_n$ independent of $d$.
\end{theorem}

We sketch the ideas used in the analysis. First, we will
generalize to a ``4-function'' version of the test, which is more amenable to
induction. Then, we use hypercontractivity (\cite{hyper1},\cite{hyper2}, \cite{amireddy2023low}) to
prove \cref{3thm:dichotomy}, which establishes a dichotomy for 4-tuples of
functions that pass the test with probability $1-\eps$: if $(f,g,k,h)$ passes
the test with probability $\eps$, then either $f$ (and $g,h,k$ as well) is
$2\eps$-close to a direct sum, or $f$ is $\Omega_n(1)$-far from a direct
sum.
Next, we will relate the pass probability of $(f,g,h,k)$ to the pass probability
of 1-variable restrictions of $f,g,k,h$. This allows us to go down one dimension
and apply induction. Crucially, the dichotomy theorem allows us to avoid our
closeness factor deteriorating at each step of the induction, which would result
in a dependence on the dimension in our result. This dichotomy based approach is
inspired by the work of \cite{optimalreedmuller, david2015direct}, but the
details of establishing the dichotomy and performing the induction are quite
different.

\subsection{The Four Function Diamond Test}
In order to facilitate our inductive approach we define a four function version of the diamond test. 
\begin{test}[4-Function Diamond]
Sample $x\sim \F_2^d$ and $a,b\sim [n]^d$. Accept iff
\[ f(a) + g(\phi_x(a,b)) + h(\phi_x(b,a)) + k(b) = 0. \] 
\end{test}
The following lemma characterizes what happens when four functions pass with probability $1$. It will be used later to establish the base case (in constant dimensions) of our inductive analysis.
\begin{lemma}\label{3lem:4basecase}
Suppose $(f,g,h,k)$ passes the $4$-function Diamond test with probability $1$.
Then $f,g,k,h$ are direct sums.
\end{lemma}
\begin{proof}
If $(f,g,h,k)$ passes the 4-function Diamond test with probability $1$ then for all $a,x,b$ we have:
\[ f(a)+g(\phi_x(a,b))+h(\phi_x(b,a))+k(b)=0=f(b)+g(\phi_{x+\one}(b,a))+h(\phi_{x+\one}(a,b))+k(a). \] 
Using \cref{fact:fliperoo} to simplify we get that for all $a,b$,
\[ f(a)+k(a)=f(b)+k(b). \] 
Again using the fact that $(f,g,h,k)$ passes the test with probability $1$ we have that for all $a,x,b$:
\[ f(a)+g(\phi_x(a,b))+h(\phi_x(b,a))+k(b)=0=f(a)+g(\phi_{x+\one}(a,b))+h(\phi_{x+\one}(b,a))+k(b). \] 
By \cref{fact:fliperoo} this implies
\[ g(\phi_x(a,b))+g(\phi_{x}(b,a)) = h(\phi_x(a,b))+h(\phi_{x}(b,a)). \] 
Letting $c=\phi_x(a,b),d=\phi_{x}(b,a)$, and noting that $c,d$ can be
arbitrary (unrelated) elements of $[n]^{d}$, we have that for all $c,d$.
\[ g(c)+g(d)  = h(c)+h(d).\] 
Combining our observations so far we conclude that there exist constants
$\alpha,\beta$ such that for all $a\in [n]^{d}$ we have
\[ f(a)=k(a) +\alpha, \quad g(a) = h(a)+\beta.\] 
Thus, for all $a,b$ we have
\[ f(a)+f(b)=g(\phi_x(a,b))+g(\phi_x(b,a))+\alpha+\beta. \] 
Taking $x=0$ we conclude
\[ f(a)+f(b) = g(a)+g(b)+\alpha+\beta. \] 
Thus, there exists $\gamma$ such that  for  all $a$ we have
\[ f(a)=g(a)+\gamma.\] 
In summary, we have shown that for all $a,b,x$, we have
\[ f(a)+f(b)=f(\phi_x(a,b))+f(\phi_x(b,a))+\alpha+\beta. \] 
Setting $x=0$ again shows that $\alpha+\beta = 0$.
Thus, we have that $f$ passes the (1-function) Diamond test with probability $1$.
By \cref{3lem:basecase} this implies that $f$ is a direct sum.
Finally, recall that we have shown in the course of our proof that $g,h,k$ all
differ from $f$ by a constant. This concludes the proof.
\end{proof}

\subsection{A Dichotomy for Functions Passing the Diamond Test}
We are now prepared to show \cref{3thm:dichotomy}, which states
that if the 4-function Diamond test accepts $(f,g,h,k)$ with probability exactly
$1-\eps$ then $f$ is either $2\eps$-close to a direct sum, or $\Omega_n(1)$-far from being a direct sum. 
To this end, we first establish two lemmas which will useful in
the proof.

\begin{lemma}\label{3lem:fghconst}
Fix $d\in \N$ and functions $f,g,h:[n]^{d}\to \F_2$.
Suppose $f,g,h$ have
\[ \Pr_{a,b\sim [n]^d,x\sim \F_2^d}[f(a)+g(\phi_x(a,b))+h(\phi_x(b,a))=1]<\eps. \] 
Then, $f,g,h$ are $3\eps$-close to constants.
\end{lemma}
\begin{proof}
Fix $f,g,h$ as in the theorem. 
Performing a union bound and using \cref{fact:fliperoo} we have
\[ \Pr_{a,b,x}[f(a)+g(\phi_x(a,b))+h(\phi_x(b,a))=f(b)+g(\phi_x(a,b))+h(\phi_x(b,a))] >  1-2\eps.\] 
Thus, $f$ is $2\eps$-close to a constant $\alpha$.
Then, by a union bound we have
\[ \Pr_{a,b,x}[g(\phi_x(a,b))+h(\phi_x(b,a))=\alpha+1]>1-3\eps. \] 
Of course, the distribution of $(\phi_x(a,b),\phi_x(b,a))$ is the same as the
distribution of $(a,b)$.
Thus, we have
\[ \Pr_{a,b}[g(a)+h(b)= \alpha+1]>1-3\eps. \] 
By the probabilistic method this implies that $g,h$ are each $3\eps$-close to
constants.
\end{proof}

We will also need the following lemma, which is an extension of the classic
\defn{small set expansion} property of the noisy hypercube (see \cite{od21}) to \emph{grids}.
\begin{fact}[\cite{od21}] \label{fact:grid-hyper}
Fix $n\in\N$. 
There exists $\lambda_n > 0$ such that for any $d$, and any set $S\subseteq [n]^d$ we have
\[\Pr_{a\sim [n]^d, b\sim \T_{1/2}(a)}[a\in S \land b\in S]\le 2\mu(S)^{1+\lambda_n}.\] 
\end{fact}

Now we are ready to establish the dichotomy theorem.
\begin{theorem}\label{3thm:dichotomy}
Fix $n\in \N$. There exists a constant $c_n> 0$ such that the following is
true for any $d$ and any functions $f,g,h,k:[n]^{d}\to\F_2$.
Suppose $k$ has $\dist(k, \mathsf{direct sum}) < c_n$.
Then, the $4$-function Diamond test rejects $(f,g,h,k)$ with probability at
least $\dist(k,\mathsf{direct sum})/2$.
\end{theorem}
\begin{proof}
Fix $f,g,k,h$ as described in the theorem statement.
Let $\chi$ be the closest direct sum to $k$. Then,
\[ \chi(a)+\chi(b)=\chi(\phi_x(a,b))+\chi(\phi_x(b,a)). \] 
Thus, our goal is equivalent to showing that
\[ \Pr_{a,x,b}[(\chi+f)(a)+(\chi+g)(\phi_x(a,b))+(\chi+h)(\phi_x(b,a))\neq
(\chi+k)(b)]\ge \dist(k,\mathsf{direct sum})/2. \] 
Thus, it suffices to consider the case that the closest direct sum to $k$
is the zero function. We assume this for the remainder of the proof. Then,
$\dist(k,\mathsf{direct sum})=\mu(k)$.
We are already done unless
\begin{equation}
  \label{3eq:eatzero}
\Pr_{a,x,b}[f(a)+g(\phi_x(a,b))+h(\phi_x(b,a))=1]\le 2\mu(k).
\end{equation}
Thus, we may assume that \cref{3eq:eatzero} is true.
Now, by \cref{3lem:fghconst} we have that $f,g,h$ are $6\mu(k)$-close to
constants.
Without loss of generality we may assume $g,h$ are both closer to $0$ than to
$1$; otherwise we can add $1$ to both $g,h$, which does not affect the sum
$g(\phi_x(a,b))+h(\phi_x(b,a))$.
If $f$ were closer to $1$ than to $0$ then we would have 
\[ \Pr_{(a,x,b)}[f(a)=1,g(\phi_x(a,b))=h(\phi_x(b,a))=0]\ge 1-18\mu(k) > 2\mu(k), \] 
with the final inequality holding by our assumption that $\mu(k)<c_n$ (we will
choose $c_n<.001$); but this
would contradict \cref{3eq:eatzero}, so it cannot happen.
Thus, $f,g,h$ are all $6\mu(k)$-close to $0$.

Now, we return to analyzing the probability that the 4-function Diamond test
rejects $(f,g,h,k)$; we write $\Pr[\mathsf{rej}]$ to denote the probability. 
Then,
\begin{align*}
\Pr[\mathsf{rej}]&\ge \Pr_{a,x,b}[k(b)=1 \land f(a)=g(\phi_x(a,b))=h(\phi_x(b,a))=0]\\
&\ge \mu(k)-\Pr_{a,b}[k(b)=f(a)=1]-\Pr_{a,x,b}[k(b)=g(\phi_x(a,b))=1]-\Pr_{a,x,b}[k(b)=h(\phi_x(b,a))=1]\\
&\ge \mu(k)-\mu(k)\mu(f)-\Pr_{a,b\sim\T_{1/2}(a)}[k(b)=g(a)=1]-\Pr_{a,b\sim\T_{1/2}(a)}[k(b)=h(a)=1].
\end{align*}
Then, using \cref{fact:grid-hyper} we have
\[\Pr[\mathsf{rej}] \ge \mu(k)(1-\mu(f)) - (\mu(k)+\mu(g))^{1+\lambda_n} - (\mu(k)+\mu(h))^{1+\lambda_n}),\]
where $\lambda_n>0$ is the constant from \cref{fact:grid-hyper}.
Recalling that $\mu(f),\mu(g),\mu(h)\le 6\mu(k)$ we have: 
\[\Pr[\mathsf{rej}] \ge \mu(k)(1-\mu(k)-100\mu(k)^{\lambda_n}).\]
Thus, there is some constant $c_n$ (dependent
only on $n$) such that if $\mu(k)\le c_n$ then the rejection probability in
is at least $\mu(k)/2$, as desired.
\end{proof}

\subsection{A Lemma Concerning Restrictions}
To complete our analysis of the Diamond test, we will relate the Diamond test to lower dimensional Diamond tests, and argue inductively. The Dichotomy lemma just established will be the key to ensuring that our soundness parameter does not deteriorate as we perform the induction. First, we need a lemma that controls the restrictions of a function which is far from a direct sum; namely, we show that a function which is far from a direct sum cannot have too many restrictions which are close to direct sums.

For some set $R \subseteq [n]^d$, we let $f|_R$ denote the restriction of $f$ to $R$, so that $f|_R: S \xrightarrow[]{} \F_2$ is given by $f|_R(x) = f(x)$ for all $x \in R$. We will specifically be considering restrictions where $R$ is of the form $R = \{x \in [n]^d  \; | \; x_i = j\}$ for some $1 \leq i \leq d$ and some $j \in [n]$. We call such $R$, \emph{one-variable restrictions}. Throughout this subsection we use $\ind$ to denote the indicator of an event. The goal of this subsection is to establish the following Lemma.

\begin{lemma} \label{lm: restriction}
    Suppose there are $100n^2/\eps$ one-variable restrictions $R$ such that $f|_R$ is $\varepsilon$-close to a direct sum on $R$. Then, $f$ is $O(\varepsilon)$-close to a direct sum.
\end{lemma}

We will need Dinur and Golubev's result regarding the soundness of the square in the cube test, which we state below.
\begin{theorem} \cite{di19} \label{thm: sq in cu}
Let $f: [n]^d \xrightarrow{} \F_2$ satisfy,
\[
\Pr_{a,b \in [n]^d, x,y,z \in \{0,1\}^d}[f(a) + f(\phi_x(a,b)) + f(\phi_{y}(a,b)) + f(\phi_{x+y}(a,b)) = 0] \geq 1-\eps.
\]
Then $f$ is $O(\eps)$-close to a direct sum.
\end{theorem}

At a high level our proof of \cref{lm: restriction} follows the strategy of the plane versus point analysis of \cite{RS}. Suppose that there are many restrictions $R_i$ with local direct sums $f_i: R_i \xrightarrow[]{} \F_2$ such that $f_i$ and $f$ are close. We will first show that many --- in fact at least a constant fraction --- of these $f_i$'s are actually consistent with each other. Using these consistent $f_i$'s we can then define a function $F$ on the whole space and argue that 1) $F$ is close to $f$ and 2) $F$ passes the square in the cube test with high probability and is thus close to a direct sum. For both 1) and 2) we are crucially using the fact that $F$ is defined using many consistent direct sums, so for nearly every point $x$ in the domain, there are in fact many of these local direct sums defined on $x$. 

\subsubsection{A Sampling Lemma}
In order to carry out the mentioned strategy, we must first show a sampling lemma. This lemma roughly states that given a large enough set of restrictions $R_i$, the measure on $[n]^d$ produced by first choosing a random $R_i$ and then choosing a random $x \in R_i$ is nearly equal to the uniform measure up to constant factors (which will be negligible). 

Let $\mathcal{R}$ be a set of one variable restrictions and let $M = |\mathcal{R}|$. Define the measure $\nu_{\mc{R}}$ on points $[n]^d$ via the following sampling procedure:
\begin{itemize}
    \item Choose $R \in \mathcal{R}$.
    \item Choose $x \in R$.
\end{itemize}
Letting $N_{\mc{R}}(x) = |\{R \in \mc{R} \; | \; R \ni x\}|$, it is clear that, 
\[
\nu_{\mc{R}}(x) = \frac{N_{\mc{R}}(x)}{M} \cdot \frac{1}{n^{d-1}}.
\]
For a subset of points $S \subseteq [n]^d$, we define
\[
\nu_{\mc{R}}(S) = \sum_{x \in S}\nu_{\mc{R}}(x).
\]
We also let $\mu$ denote the standard uniform measure on $[n]^d$, so for a set of points $S \subseteq [n]^d$, we have $\mu(S) = \frac{|S|}{n^d}$. At times we will also use $\mu$ to denote the uniform measure over grids of dimension other than $d$. It will be clear from context when this is the case and we remark that one should think of the dimension in this subsection as being arbitrary. 

\begin{lemma} \label{lm: var bound}
   Let $\mc{R}$ be a set of one-variable restrictions of size $M$. Then,
    \[
    \E_{x}[N_{\mc{R}}(x)] = \frac{M}{n}\quad \text{and} \quad \Var(N_{\mc{R}}(x)) \leq \frac{M}{n}.
    \]
\end{lemma}
\begin{proof} 
   For the first part, we have, by linearity of expectation:
    \[
    \E_{x}[N_{\mc{R}}] = \sum_{R \in \mc{R}}\E_{x}[\ind(x \in R)] = \frac{M}{n}.
    \]
    Towards the second part, we write
    \begin{align*}   
    \E[N_{\mc{R}}^2] &= \sum_{R_1, R_2 \in \mc{R}}\E_{x}[\ind(x \in R_1) \cdot \ind(x \in R_2)] \\
    &=  \sum_{R \in \mc{R}}\E_{x}[\ind(x \in R)] +  \sum_{R_1 \neq R_2}\E_{x}[\ind(x \in R_1) \cdot \ind(x \in R_2)] \\
    &\leq \frac{M}{n} + \frac{M^2}{n^2}.
    \end{align*}
    It follows that,
    \[
    \Var(N_{\mc{R}}) = \left(\E_{x}[N_{\mc{R}}]\right)^2 - \E[N_{\mc{R}}^2] \leq \frac{M}{n}.
    \]
\end{proof}
Applying Chebyshev's inequality, we get the following.
\begin{lemma} \label{lm: chebyshev sampling}
    For any $c > 0$, it holds that
    \[
    \Pr_{x}\left[\left|N_{\mc{R}}(x) - \frac{M}{n}\right| \geq c\cdot \frac{M}{n}\right] \leq \frac{n}{c^2 \cdot M}.
    \]
\end{lemma}
\begin{proof}
    This lemma follows from a direct application of Chebyshev's inequality combined with the variance bound in \cref{lm: var bound}.
\end{proof}
\begin{lemma} \label{lm: sampling}
    For any set of points $S \subseteq [n]^d$ and any set $\mc{R}$ of one-variable restrictions of size $M$, we have
    \[
   \frac{1}{2}\left(\mu(S) - \frac{4n}{M}\right) \leq \nu_{\mc{R}}(S)  \leq 2\mu(S) + \frac{5n}{M}.
    \]
\end{lemma}
\begin{proof}
 For each integer $i$, let 
\[
m_i =  \left|\left\{x \in [n]^d \; | \; 2^i \frac{M}{n} \leq N_{\mc{R}}(x) < 2^{i+1}\frac{M}{n}\right\} \right|.
\]
By \cref{lm: var bound} we get that
\[
\frac{m_i}{n^d} \leq \frac{1}{(2^i-1)^2 M/n}.
\]

Towards the upper bound, we expand $\nu_{\mc{R}}(S)$ and perform a dyadic partitioning:
\begin{align*}
    \nu_{\mc{R}}(S) = \sum_{x \in S} \nu_{\mc{R}}(x) 
    = \sum_{x \in S} \frac{N_{\mc{R}}(x)}{Mn^{d-1}} 
    \leq  \frac{1}{Mn^{d-1}} \left(2\cdot \frac{M}{n} \cdot |S|+\sum_{i = 1}^{\infty} m_i 2^{i+1} \cdot \frac{M}{n} \right).
\end{align*} 
The first term in the parenthesis contributes 
$2\mu(\mathcal{S})$, whereas the contribution of
the second summand can be upper bounded by
\[
    \frac{1}{n^d}\cdot \sum_{i = 1}^{\infty} m_i 2^{i+1}
    \leq \frac{n}{M}\cdot \sum_{i=1}^{\infty} \frac{2^{i+1}}{(2^i-1)^2} 
    \leq \frac{5n}{M}
\]
For the lower bound, we have
\[
\nu_{\mc{R}}(S) = \sum_{x \in S}\nu_{\mc{R}}(x) \geq  \sum_{x \in S, N_{\mc{R}} \geq M/(2n)}\nu_{\mc{R}}(x) \geq |\{x \in S \; | \; N_{\mc{R}}(x) \geq n/(2M) \}| \cdot \frac{1}{2n^{d}}.
\]
To lower bound the last term above, we use \cref{lm: chebyshev sampling} to get
\[
\Pr_{x}[N_{\mc{R}}(x) < \frac{M}{2n}] \leq \frac{4n}{M}.
\]
It follows that 
\[
|\{x \in S \; | \; N_{\mc{R}}(x) \geq M/(2n) \}| \geq n^d\left((\mu(S) - \frac{4n}{M}\right).
\]
Putting everything together, we get the desired lower bound:
\[
\nu_{\mc{R}}(S) \geq \frac{1}{2}\left(\mu(S) - \frac{4n}{M}\right).
\]
\end{proof}

\subsubsection{Proof \cref{lm: restriction}}
Suppose $\mc{R} = \{R_1, \ldots, R_M\}$ is the set of one-variable restrictions on which $f$ is $\varepsilon$-close to a direct sum. For each $R_i$, let $f_i$ be this direct sum, so 
\[
\dist(f_i, f|_{R_i}) \leq \varepsilon,
\]
for each $1 \leq i \leq M$. By assumption $M \geq 100n^2/\eps$. Note that if $\eps$ is large --- say $\eps \geq \frac{1}{100}$ --- then the result is trivial, so for the remainder of the proof we suppose $\eps < \frac{1}{100}$.

As a useful fact, we show that the set of direct sum functions has minimum distance $\frac{1}{n}$.
\begin{lemma} \label{lm: distance}
    For any two distinct direct sums $f$ and $g$, we have
    \[
    \dist(f, g) \geq \frac{1}{n}.
    \]
\end{lemma}
\begin{proof}
    Write $f(x) = \sum_{i=1}^d L_i(x_i)$ and $g(x) = \sum_{i=1}^d L'_i(x_i)$. First suppose for each $i$, $L_i - L'_i$ is constant, then since $f(x) \neq g(x)$, we have $\delta(f,g) = 1$ and we are done.

    Suppose that there is some $i$ such that $L_i - L'_i$ is not the constant function and without loss of generality, let this $i$ be $1$. Then,
    \[
    \Pr_{x \in [n]^d}[f(x) \neq g(x)] = \E_{x_2,\ldots, x_d}\left[\Pr_{x_1}[L_1(x) - L'_1(x) \neq \sum_{i=2}^d L_i(x_i) - L'_i(x_i)]\right].
    \]
    To conclude, note that for any $x_2, \ldots, x_d$, the inner probability on the right hand side above is at least $1/n$ since $L_1 - L'_1$ is not the constant function.
\end{proof}

We now construct the following graph $G = (V, E)$. The vertex set is $V = \{1, \ldots, M\}$ while the set of edges $E$ consists of all pairs $(i,j)$ such that the functions $f_i$ and $f_j$ are consistent: 
\[
f_i|_{R_i \cap R_j} = f_j|_{R_i \cap R_j}.
\]
If $R_i \cap R_j$ is empty, then we also consider $f_i$ and $f_j$ consistent and have $(i,j) \in E$. We will first show that $G$ contains a large clique. Call a graph is \emph{transitive} if it is an edge disjoint union of cliques. Also define
\[
\beta(G) = \max_{(i,j) \notin E} \Pr_{k \in V}[(i,k), (j,k) \in E].
\]
Note that $G$ is transitive if and only if $\beta(G) = 0$. The following lemma from \cite{RS} gives a sort of approximate version of this fact and states that if $\beta(G)$ is small then $G$ is almost a transitive graph. 
\begin{lemma}
The graph $G$ can be made transitive by removing at most $3\sqrt{\beta(G)}|V|^2$ edges.
\end{lemma}

In our next two lemmas we show that $G$ has many edges, and then bound $\beta(G)$. This will establish that $G$ contains a large clique, which corresponds to many local direct sums $f_i$ that are consistent with one another.

\begin{lemma} \label{lm: many edges}
The graph $G$ has at least $0.9M^2$ edges.
\end{lemma}
\begin{proof}
    For each $R_i$, let $S_i \subseteq R_i$ be the set of points on which $f_i$ and $f$ differ. We have that
    \[
    \frac{|S_i|}{|R_i|} \leq \varepsilon,
    \]
    by assumption.
    
    Now fix an $R_i$. We will apply \cref{lm: sampling} inside of $R_i$, which is isomorphic to the grid $[n]^{d-1}$. For each $j$, let $R'_j = R_i \cap R_j$. Let $\mc{R'} = \{R'_j \; | \; \frac{R'_j \cap S_i}{R'_j} \geq 3 \varepsilon \}$, and let $M' = |\mc{R'}|$. Now consider the size of $S_i$ inside of $R_i$ under the measure $\nu_{\mc{R'}}$. We have,
    \[
    \nu_{\mc{R'}}(S_i) \geq 3 \varepsilon.
    \]
     By \cref{lm: sampling}
    \[
    3\eps \leq \nu_{R'}(S_i) \leq 2\eps + 5n/M'.
    \]
    It follows that,
    \[
    M' \leq \frac{5n}{\eps}.
    \]
     Thus, 
    \[
    \Pr_{R_i, R_j}\left[\frac{|R_i \cap R_j \cap S_i|}{|R_i \cap R_j|} \geq 3 \varepsilon\right] \leq \frac{5n}{\eps M}.
    \]
    By a union bound, it follows that with probability at least $1 - \frac{10n}{\eps M}$, we have that both
    \[
    \frac{|R_j \cap R_i \cap S_i|}{|R_j \cap R_i|} \leq 3 \varepsilon \quad \text{and} \quad   \frac{|R_j \cap R_i \cap S_j|}{|R_j \cap R_i|} \leq 3 \varepsilon.
    \]
    For such $i, j$, we get that 
    \[
    \Pr_{x \in R_i \cap R_j}[f_i(x) = f_j(x)] \geq 1 - \Pr_{x \in R_i \cap R_j}[x \in S_i \cup S_j] \geq 1 - 6 \varepsilon > \frac{1}{n},
    \]
     and $f_{i}|_{R_i \cap R_j} = f_j|_{R_i \cap R_j}$ by \cref{lm: distance}. Thus choosing $i,j$ randomly, we get that they are adjacent in $G$ with probability at least $1 - \frac{10n}{\eps M} \geq 0.9$.
\end{proof}
\begin{lemma} \label{lm: bound beta}
    Suppose $R_i, R_j$ are distinct one-variable restrictions such that $f_{i}|_{R_i \cap R_j} \neq f_{j}|_{R_i \cap R_j}$. Then,
    \[
    \Pr_{R_{k}}[f_{i}|_{R_i \cap R_j \cap R_k} = f_{j}|_{R_i \cap R_j \cap R_k}] \leq \frac{5n^2}{M} \leq \frac{\eps}{20}.
    \]
\end{lemma}
\begin{proof}
    Let $W = R_i \cap R_j$. For each $k$ let $R'_k = R_k \cap W$, let $f'_i = f_i|_W$, and let $f'_j = f_j|_W$.
    
    Throughout the proof we will consider $W$ as our ambient space and we will apply \cref{lm: sampling} inside of $W$. Note that we may do so because $W$ is isomorphic to the grid $[n]^{d-2}$. Define the following set of one-variable restrictions inside $W$:
    \[
    \mathcal{R}' = \{W \cap R_k \neq \emptyset \; | \; R_k \in \mc{R}, f'_i|_{R'_k} = f'_j|_{R'_k} \}.
    \]
    We will show that $|\mc{R}'|$ cannot be too large. Indeed, let $M'$ denote its size and let $S \subseteq W$ be
    \[
    S = \{x \in W \; | \; f'_i(x) \neq f'_j(x) \}.
    \]
    We have that $\frac{|S|}{|W|} \geq \frac{1}{n}$. On the other hand, by definition of $\mc{R}'$, we have that $\nu_{R'}(S) = 0$. Therefore by \cref{lm: sampling}:
    \[
    0 \geq \frac{1}{2}\left(\frac{1}{n} - \frac{4n}{M'} \right).
    \]
    It follows that 
    \[
    M' \leq 4n^2.
    \]
    Finally there can be up to $2n$ additional restrictions $R_k$ such that $R_k \cap R_i = \emptyset$ or $R_k \cap R_j = \emptyset$. The result follows.
\end{proof}

Combining \cref{lm: many edges} and \cref{lm: bound beta}, we get that $G$ contains a transitive subgraph with at least $0.8M^2$ edges. Let $\mc{C}_1, \ldots, \mc{C}_J$ be the edge disjoint union cliques of this transitive subgraph and say $\mc{C}_1$ is the largest one. It follows that,
\[
0.8M^2 \leq \sum_{I=1}^J |\mc{C}_i|^2 \leq |\mc{C}_1| \cdot M,
\]
and $G$ contains a clique of size $M_1 = 0.8M$. From now we let $\mc{C}$ denote this clique and let $\mc{C}= \{1, \ldots, M_1\}$. We define the following function $F$ using the functions $f_1, \ldots, f_{M_1}$: set $F(x) = f_j(x)$ if there is some $j \in \mc{C}_1$ such that $x \in R_j$, and $F(x) = 0$ otherwise.

To conclude, we will show that $F$ is close to $f$ and $F$ is close to a direct sum. This will establish that $f$ is close to a direct sum.

\begin{lemma} \label{lm: F close to f}
$F$ is $3\eps$-close to $f$.
\end{lemma}
\begin{proof}
Let $S = \{x \in [n]^d \; | \; f(x) \neq F(x) \}$. Since $f$ is $\eps$-close to $F$ on the restrictions $R_i$ for all $i \in \mc{C}$, we have
\[
\nu_{\mc{C}}(S) \leq \eps.
\]
By \cref{lm: sampling}, it follows that, 
\[
\eps \geq \frac{1}{2}\left(\mu(S) -  \frac{4n}{0.8M}\right),
\]
and 
\[
\mu(S) \leq 3\eps.
\]
\end{proof}

\begin{lemma} \label{lm: close to ds}
$F$ is $O(\eps)$-close to a direct sum.
\end{lemma}
\begin{proof}
We will show that $F$ passes the square in a cube test with high probability. Recall this test operates by choosing $a,b \in [n]^d$, $x,y \in \{0,1\}^d$ and checking if $F$ satisfies,
\[
F(a) + F(\phi_x(a,b)) + F(\phi_{y}(a,b)) + F(\phi_{x+y}(a,b)) = 0.
\]
Note that $F$ passes if there is some $i \in \mc{C}$ such that $a,b \in R_i$. Indeed, in this case all of the queried points are in $R_i$, so $F$ and the direct sum $f_i$ agree on all of the queried points and
\[
F(a) + F(\phi_x(a,b)) + F(\phi_{y}(a,b)) + F(\phi_{x+y}(a,b)) = f_i(a) + f_i(\phi_x(a,b)) + f_i(\phi_{y}(a,b)) + f_i(\phi_{x+y}(a,b)) = 0.
\]
To bound the probability that $i \in \mc{C}$ such that $a,b \in R_i$, let $N'(x) =| \{i \in \mc{C} \; | \; x\in R_i \}|$. By \cref{lm: chebyshev sampling},
\[
\Pr_{a}\left[N'(a) \geq \frac{M}{1.6n}\right] \leq \frac{5n}{M}.
\]
Now condition on $a$ satisfying $N'(a) \geq \frac{M}{1.6n}$ and let $\mc{C}'' = \{i \in \mc{C} \; | \; a \in R_i \}$. We label $\mc{C}'' = \{1,\ldots, M''\}$ where $M'' \geq \frac{M}{1.6n}$ and let $N''(x) = |\{i \in \mc{C}'' \; | \; x \in R_i \}|$. Then by \cref{lm: chebyshev sampling} again,
\[
\Pr_{b}[N''(b) \geq 0] \leq \frac{n}{M''} \leq \frac{1.6n^2}{M}.
\]

By a union bound, with probability at least $1-\frac{7n^2}{M}$ we have that $a$ and $b$ are both contained in some $R_i$ with $i \in \mc{C}$. It follows that $F$ passes the square in a cube test with probability at least $1-\frac{7n^2}{M}$. By \cref{thm: sq in cu}, $F$ is $O(\frac{n^2}{M}) = O(\eps)$-close to a direct sum. 
\end{proof}
\begin{proof}[Proof of \cref{lm: restriction}]
    The result follows by combining \cref{lm: F close to f}, \cref{lm: close to ds} and using the triangle inequality.
\end{proof}

\subsection{Soundness of the Diamond Test}
Equipped with \cref{3thm:dichotomy} we complete the proof of \cref{3thm:diamond_cube_soundness}. Ultimately, we wish to conclude that if the 4-function Diamond test rejects $(f,g,h,k)$ with probability exactly $\eps$, then $f,g,h,k$ are all $O(\eps)$-close to direct sums.
However, what we have thus far only allows us to tonclude that $f,g,k,h$ are each \emph{either} $c_n$-far from being a direct sum, \emph{or} $2\eps$-close to a direct sum.
Our next theorem rules out the former possibility.
\begin{theorem}\label{3thm:induct}
Fix $n\in \N$.
Let $c_n$ be the constant from \cref{3thm:dichotomy}.
There is a constant $\delta_n$ such that the following holds.
Fix $d\in \N$, and functions $f,g,h,k: [n]^{d}\to \F_2$. Suppose $f$ is
$c_n$-far from being a direct sum. Then, the 4-function Diamond test rejects $(f,g,h,k)$
with probability at least $\delta_n$.
\end{theorem}

\begin{proof}
By \cref{lm: restriction} there exists a constant $K_n$ such that if $f$ has at least $100n^2/c_n$ restrictions which are $c_n/K_n$-close to affine, then $f$ is $c_n$-close to affine.
Define $N_n = \ceil{100n^2/c_n}+1$ and $\delta_n = \min(c_n/(2K_n), n^{-3N_n})$.
We will prove the theorem by induction on the dimension $d$.

The base case is $d\le N_n$.
If $d\le N_n$ then by \cref{3lem:basecase} the 4-function Diamond test rejects
$(f,g,h,k)$ with non-zero probability since $f$ is not a direct sum. But, any non-zero
probability over events determined by $a,b,x$ is at least $1/n^{3N_n}\ge
\delta_n$.

Now, fix $d> N_n$, and assume the theorem for functions on $\F_2^{d-1}$.
Let $f\mid_{i,\sigma}:[n]^{d-1}\to\F_2$ denote the function obtained by taking $f$ and fixing $x_i=\sigma$.
Then, there is some $i\in [d]$ such for all $\sigma\in [n]$ the restriction $f\mid_{i,\sigma}$ is
$c_n/K_n$-far from a direct sum.
To analyze the probability that the 4-function Diamond test rejects $(f,g,h,k)$,
we consider a partition of the probability space into $2n^2$ disjoint events. 
The events are the value in $[n]^2\times \F_2$ that our sampled $(a_i,b_i,x_i)$ takes.
Now, consider the probability that the 4-function Diamond test rejects
$(f,g,h,k)$, conditional on $(a_i,b_i,x_i)=(\alpha,\beta,\xi)$ for some $\alpha,\beta,\xi$.
Conditional on this event, we will sample $a,b\in[n]^{d-1},x\in \F_2^{d-1}$ and check:
\[ f\mid_{i,\alpha}(a)+g\mid_{i,\phi_\xi(\alpha,\beta)}(\phi_x(a,b)) =
h\mid_{i,\phi_\xi(\beta,\alpha)}(\phi_x(b,a))+k\mid_{i,\beta}(b). \] 
This is precisely a 4-function Diamond test on $(d-1)$-dimensional functions.
Now we consider two cases.\\
\textbf{Case 1: $f\mid_{i,\alpha}$ is $c_n$-far from being a direct sum}. \\
Then by the inductive hypothesis this $(d-1)$-dimension test will reject with
probability at least $\delta_n$.\\
\textbf{Case 2:} $\dist(f\mid_{i,\alpha}, \mathsf{direct sum})\in [c_n/K_n, c_n]$.\\
Then, by \cref{3thm:dichotomy} this $(d-1)$-dimensional test rejects with
probability at least $c_n/(2K_n)\ge \delta_n$.

We have seen that in all cases, the lower dimensional 4-function Diamond tests
reject with probability at least $\delta_n$.
Thus, the actual 4-function Diamond test rejects with probability $\delta_n$ as well.
\end{proof}

Combining \cref{3thm:dichotomy} and \cref{3thm:induct} we have get the following corollary and conclude the proof of \cref{3thm:diamond_cube_soundness}.
\begin{corollary}\label{3cor:4fun}
If $(f,g,h,k)$ passes the 4-function Diamond test with probability
$1-\eps$, then $f,g,h,k$ are each $O_n(\eps)$-close to direct sums. 
\end{corollary}
\begin{proof}
    The result follows immediately from  \cref{3thm:dichotomy} and \cref{3thm:induct}.
\end{proof}
We can now conclude the proof of \cref{3thm:diamond_cube_soundness} and establish soundness for the Diamond test.

\begin{proof}[Proof of \cref{3thm:diamond_cube_soundness}]
    The result follows from \cref{3cor:4fun} by setting $f = g = h = k$.
\end{proof}

\subsection{The Fourier Analytic Interpretation of the Diamond Test} \label{sec: fourier}

In this section, we give a Fourier analytic interpretation of the Diamond test.
Let $\mathsf{Even}$ denote the class of functions $f:\F_2^d\to\F_2$ satisfying
$f(x+\one)=f(x)$ and let $\mathsf{Odd}$ denote the class of functions
$f:\F_2^d\to\F_2$ satisfying $f(x+\one)=f(x)+1$. 
Let $\eo = \mathsf{Even}\cup\mathsf{Odd}$.
We note that $\eo$ has a nice Fourier analytic interpretation: 
$f\in \eo$ implies that $(-1)^f$ has either all its Fourier-mass on even-sized
characters, or all of its Fourier-mass on odd-sized characters.
Given a function $f:\F_2^d\to\F_2$, define random variable $\mathcal{R}(f)$ to
be a random restriction of $f$ obtained as follows: for each $i\in [d]$
independently, with probability $1/4$ restrict $x_i$ to $0$, with probability
$1/4$ restrict $x_i$ to $1$, and with probability $1/2$ do not restrict $x_i$.
We now show that the Diamond test's correctness is equivalent to the following Fourier analytic fact

\begin{theorem} \label{conj:fourier}
$$\dist(f,\mathsf{affine}) \le O(\E[\dist(\mathcal{R}(f), \eo)])  $$
\end{theorem}
In fact, we show something stronger: there is a very tight quantitative relationship between $$\E[\dist(\mathcal{R}(f), \eo)]$$ and the probability that the Diamond test rejects $f$. 

\begin{lemma}\label{lem:frr}
Fix $f$. Let $\eps$ be the probability that the Diamond test rejects $f$, and let $$\delta = \E[\dist(\mathcal{R}(f),\eo)].$$ Then, 
$$2\delta  \le \eps \le 4\delta.$$
\end{lemma}
\begin{proof}
Fix $f,\eps,\delta$ as in the lemma statement.
Given $a,b$ define $C_{a,b}=\setof{\phi_x(a,b)}{x\in\F_2^d}$ and $f_{a,b}(x) = f(\phi_x(a,b))$.
We have:
\begin{equation}\label{eq:restrict}
\eps = \Pr_{a,b,x}[f_{a,b}(x)+f_{a,b}(x+\one) \neq f_{a,b}(0)+f_{a,b}(\one)].
\end{equation}
\begin{claim}
$\eps \le 4\delta$.
\end{claim}
Let $\delta_{a,b}=\dist(f_{a,b}, \eo)$ .  
By a union bound we have
$$\Pr_{z,x}[f_{a,b}(x+\one)+f_{a,b}(x)\neq f_{a,b}(z+\one)+f_{a,b}(z)]\le 4\delta_{a,b}.$$
By averaging, $\delta = \E_{a,b}[\delta_{a,b}]$. Thus, 
\begin{equation}
   \label{eq:frrrr} 
\Pr_{a,b,z,x}[f_{a,b}(x+\one)+f_{a,b}(x)\neq f_{a,b}(z+\one)+f_{a,b}(z)]\le 4\delta.
\end{equation}
Now, observe that for random $a,b,z,x$ the following two distributions are the same:
\begin{enumerate}
    \item $(\phi_x(a,b),\phi_x(b,a),a,b)$
    \item $(\phi_x(a,b),\phi_x(b,a),\phi_z(a,b),\phi_{z}(b,a))$.
\end{enumerate}
Thus \cref{eq:frrrr} implies $\eps \le 4\delta$.

\begin{claim}
$\eps\ge 2\delta$.
\end{claim}
For any function $g$, 
$$\Pr_x[g(x)\neq g(x+\one)]=2\dist(g,\mathsf{Even}),$$
$$\Pr_x[g(x)= g(x+\one)]=2\dist(g,\mathsf{Odd}).$$
Thus, for any $a,b$ we have
$$\Pr_x[f_{a,b}(x)+f_{a,b}(x+\one)\neq f_{a,b}(0)+f_{a,b}(\one)]\ge 2\dist(f_{a,b},\eo).$$
Averaging over $a,b$ on both sides gives $\eps \ge 2\delta$.




\end{proof}

\begin{proof}[Proof of \cref{conj:fourier}]
The result follows from the soundness of the Diamond Test, \cref{3thm:diamond_cube_soundness}
\end{proof}
\section{The Affine on Subcube Test}\label{sec:subcube}

In this section we consider a class of direct sum tests that generalizes
the Square in Cube test of \cite{di19}. We call these tests \defn{Affine on
Subcube} tests. Indeed, the Square in Cube test can be viewed as performing the
BLR affinity test inside of a random induced hypercube inside of $[n]^d$. We
show that in fact testing for affinity on a random subcube with \emph{any}
affinity tester also works as a direct sum test. 
\begin{test}[Affine on Subcube]\label{test:affsubcube}
Sample $a,b\in [n]^d$ and run an affinity test that has soundness function $s(\delta) = \Omega(\delta)$ on the function $x\mapsto f(\phi_x(a,b))$.
\end{test}

The flexibility of being able to use any affinity tester also enables us to
obtain a direct sum test with additional properties such as erasure resiliency,
which we discuss further in \cref{sec:erasure}, and less use of randomness, which we discuss in \cref{sec:derandomize}. Furthermore, we believe that our
analysis for going from local direct sums to a global direct sum is simpler than
that of \cite{di19} and thus noteworthy on its own.

One final benefit of our analysis is that it fixes an error in \cite{di19}'s
analysis. Specifically, they establish that a function $f$ passing the Square in
Cube test with probability $1-\eps$ is \emph{locally} close to a direct sum
(this is basically \cref{lem:fabdirectprod1} in our presentation), but they do
not show how to combine these local direct sums into a global direct sum.

We now show that the Affine on Subcube test is a valid direct sum tester. Once
again, completeness is clear. 
If $f$ is a direct sum, then the Affine on Subcube test
passes with probability $1$. 
Our focus is then on showing soundness.
\begin{theorem}\label{thm:subcube}
Fix $f:[n]^d\to \F_2, \eps\ge 0$ and an affinity test $T$. If $f$
passes the Affine on Subcube test (instantiated with $T$) with
probability $1-\eps$, then $f$ is $C\eps$-close to a direct sum, for some
constant $C$ independent of $n, d$.
\end{theorem}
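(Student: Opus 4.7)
The plan is a local-to-global argument. The first step is routine: by Markov's inequality and the assumed soundness $s(\delta) = \Omega(\delta)$ of $T$,
\[
\E_{(a,b)}\bigl[\dist(f_{a,b}, \text{affine})\bigr] = O(\eps),
\]
where $f_{a,b}(x) := f(\phi_x(a,b))$. Hence for at least a $1 - O(\eps)$ fraction of pairs $(a,b) \in [n]^d \times [n]^d$ (call them \emph{good}), $f_{a,b}$ is within distance $1/10$ of a unique affine $F_{a,b}: \{0,1\}^d \to \F_2$, which plays the role of a ``local direct sum'' on the hypercube $\prod_i \{a_i, b_i\}$.

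Next, I would stitch the $F_{a,b}$'s into a single global candidate by majority vote,
\[
g(y) := \maj\bigl\{F_{a,b}(x) : (a,b) \text{ good},\ \phi_x(a,b) = y\bigr\},
\]
and show $\dist(f,g) = O(\eps)$. For uniform $(a,b,x) \in [n]^d \times [n]^d \times \{0,1\}^d$ the induced point $y := \phi_x(a,b)$ is uniform on $[n]^d$, and
\[
\E_{(a,b,x)}\bigl[\ind[F_{a,b}(x) \neq f(y)]\bigr] \leq \E_{(a,b)\text{ good}}\bigl[\dist(F_{a,b}, f_{a,b})\bigr] + \Pr[(a,b)\text{ bad}] = O(\eps),
\]
so by a second Markov, for a $1 - O(\eps)$ fraction of $y$ the majority defining $g(y)$ equals $f(y)$. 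A useful companion observation is the following symmetry: for any fixed $x \in \{0,1\}^d$, the bijection $(r,y) \leftrightarrow (r',y')$ that swaps $r_i$ with $y_i$ on exactly the coordinates where $x_i = 1$ sends the event $\{F_{r,y}(x) = f(\phi_x(r,y))\}$ to $\{F_{r',y'}(0) = f(r')\}$, so $\Pr_{(r,y)}[F_{r,y}(x) \neq f(\phi_x(r,y))]$ does not depend on $x$ and equals $O(\eps)$.

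The main obstacle is promoting $g$ to (a function close to) a direct sum. The clean principle is: if $g$ restricts to an \emph{exactly} affine function on every hypercube $\prod_i\{r_i,y_i\}$, then fixing any reference $r$ and expanding via affinity on $\prod_i\{r_i, y_i\}$ yields
\[
g(y) = g(r) + \sum_{i=1}^d \bigl(g(r_1, \ldots, r_{i-1}, y_i, r_{i+1}, \ldots, r_d) + g(r)\bigr),
\]
which is manifestly a direct sum in $y$. My plan for the robust version is to show that, averaging over a uniformly random reference $r$, the above identity holds for a $1 - O(\eps)$ fraction of $y$ in expectation; then some specific $r$ gives a direct sum $O(\eps)$-close to $g$, hence to $f$. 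A naive union bound over the $d+2$ axis-aligned points $\{0, e_1, \ldots, e_d, 1\}$ at which $g$ needs to match $F_{r,y}$ would introduce a dimension-dependent factor, so the delicate step is to replace that union bound with an agreement/sampling argument whose loss is independent of $n$ and $d$, exploiting the affinity of $F_{r,y}$ (rather than a black-box direct product theorem). This is the part of the argument that takes the place of the direct product testing theorem of \cite{dinur2014direct} used in \cite{di19}, and is where I expect the real difficulty to lie.
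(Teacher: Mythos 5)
Your first two steps are sound and match the paper's warm-up (the interchangeability of $x$ across the identity $\phi_x(a,b)=\phi_{x+\one}(b,a)$ is exactly the symmetry the paper also exploits, in a different place), but the proposal stops exactly at the decisive step. You name the difficulty — replacing the $d$-term union bound over axis points with a dimension-free agreement argument, ``exploiting the affinity of $F_{r,y}$ rather than a black-box direct product theorem'' — and then do not give that argument. That is the entire technical content of the theorem; without it the proposal is a correct statement of where the problem lives, not a proof. Note also that your chosen route (majority-voting the \emph{values} $F_{a,b}(x)$ into a global $g$ and then massaging $g$ into a direct sum) appears to put you on a harder path than the one the paper takes, since showing that $g$ itself restricts to near-affine functions on subcubes would re-require a soundness analysis of the same flavor you started with.

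What the paper actually does is different and avoids this obstacle. It first rewrites the hypothesis in the equivalent form $\E_{a,b}[\dist(x\mapsto f(a)+f(\phi_x(a,b)),\mathsf{linear})]\le O(\eps)$, so that for each pair $(a,b)$ one extracts not a bit but a \emph{coefficient vector} $F^a(b)\in\F_2^d$ with $f(a)+f(\phi_x(a,b))\approx\ang{F^a(b),x}$. The object to stitch is then the vector-valued map $b\mapsto F^a(b)$, and the paper shows this map passes the (noise/agreement) direct product test of \cite{di19,dikstein2019agreement}: if $b'$ agrees with $b$ on a random $3/4$ fraction of coordinates, one can choose $x$ so that $\phi_x(a,b)=\phi_x(a,b')$, forcing $\ang{F^a(b),x}=\ang{F^a(b'),x}$, and a short argument upgrades this to coordinatewise agreement $F^a(b)_i=F^a(b')_i$ on the shared coordinates. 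So the paper \emph{does} invoke a black-box direct product theorem — just applied to the linear-coefficient vectors, where it gives exactly the dimension-free local-to-global step you were missing. Finally, it recovers a single global direct sum from the resulting $g_i(a,b_i)$ by comparing $f_{ab}(x)$ with $f_{ba}(x+\one)$ and then picking a reference point $\beta$ by the probabilistic method. If you want to complete your proposal, the fastest repair is to adopt this reformulation: pass from restrictions to their linear coefficient vectors and apply an agreement/direct-product theorem there, rather than trying to build a dimension-free argument from scratch on $g$.
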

\begin{proof}
Fix $f$ as in the theorem statement. That is, satisfying
\begin{equation}\label{eq:step1}
\E_{a,b\sim [n]^d}[\dist(x\mapsto f(\phi_x(a,b)), \mathsf{affine})] \le O(\eps).
\end{equation}
We re-interpret \cref{eq:step1} in three steps.
First, observe that \cref{eq:step1} is equivalent to 
\begin{equation}\label{eq:step2}
\E_{a,b\sim [n]^d,z\sim\F_2^d}[\dist(x\mapsto f(\phi_z(a,b)) + f(\phi_{x+z}(a,b)), \mathsf{linear})] \le O(\eps).
\end{equation}
Next, observe that we can rewrite $\phi_{x+z}(a,b)$ as follows:
$$\phi_{x+z}(a,b) = \phi_x(\phi_z(a,b), \phi_z(b,a)).$$
In light of this observation, \cref{eq:step2} is equivalent to
\begin{equation}\label{eq:step3}
\E_{a,b\sim [n]^d,z\sim\F_2^d}[\dist(x\mapsto f(\phi_z(a,b)) + f(\phi_x(\phi_z(a,b), \phi_z(b,a))), \mathsf{linear})] \le O(\eps).
\end{equation}
Of course, the distribution of $(\phi_z(a,b),\phi_z(b,a))$ is the same as the distribution of $(a,b)$.
So, we can rewrite \cref{eq:step3} as:
\begin{equation}\label{eq:step4}
\E_{a,b\sim [n]^d}[\dist(x\mapsto f(a) + f(\phi_x(a, b)), \mathsf{linear})] \le O(\eps).
\end{equation}
This interpretation of the test is more convenient to work with.

For each $a,b$ define $f_{ab}$ to be the function $x\mapsto
f(a)+f(\phi_x(a,b))$, and let $\eps_{ab}$ be the probability that $f_{ab}$ fails
the linearity test. Clearly $\E[\eps_{ab}] = \E[\eps_a] \le O(\eps).$
Because $f_{ab}$ passes the linearity test with probability $1-\eps_{ab}$, there
is vector $F^a(b) \in \F_2^d$ such that 
\begin{equation}\label{eq:restrictionbenice}
\Pr_x[f_{ab}(x)=\langle F^a(b), x\rangle]>1-O(\eps_{ab}),
\end{equation}
where $\ang{v,w} = \sum_i v_i w_i.$ 
We will now show that $b\mapsto F^a(b)$ is close to a direct product, by proving
that this function passes a certain direct product test. 
Dinur and Golubev \cite{di19} give the following direct product test (which is slightly more convenient here than the original direct product test of \cite{dinur2014direct}):
\begin{test}[Direct Product Test]\label{test:dp}
Sample $x\in [n]^d$. Sample a set $A$ by including each $i\in [d]$ in $A$ with probability $3/4$.
Sample $y$ by setting $y_i=x_i$ if $i\in A$, and otherwise sampling $y_i$ uniformly from $[n_i]$.
Accept iff $f(x)_i = f(y)_i$ for all $i\in A$.
\end{test}
Using \cite{dikstein2019agreement}, Dinur and Golubev show
\begin{fact}\label{fact:directproduct}
If $f$ passes \cref{test:dp} with probability $1-\eps$, then $f$ is
$O(\eps)$-close to a direct product.
\end{fact}

Now, we use \cref{fact:directproduct} to show that $b\mapsto F^a(b)$ is close to a direct product.
\begin{lemma}\label{lem:fabdirectprod1}
$b\mapsto F^a(b)$ is $O(\eps_a)$-close to a direct product.
\end{lemma}
\begin{proof}
Given $b,b'$ define $\mc{D}_{bb'}$ on $\mathbb{F}_2^d$ as follows. If $x\sim \mc{D}_{bb'}$
then for each $i\in [d]$ independently, $x_i$ is determined as follows:
\begin{quote}
If $b_i \neq b_i'$,  then $x_i = 0$.
If $b_i = b_i'$, then $x_i=1$ with probability $2/3$ and $0$ otherwise.
\end{quote}
\newcommand{\eqtf}{\mathsf{eq}_{3/4}}
Define $\eqtf = \T_{(3n/4-1)/(n-1)}$. If $b'\sim \eqtf(b)$, then for each $i$
independently, $b_i = b_i'$ with probability $3/4$.
Now, observe that if we sample $b'\sim \eqtf(b)$ and $x\sim \mc{D}_{bb'}$, then
the values $x_i$ are independent uniformly random bits.
However, if we choose $b',x$ in this manner, then the joint distribution of
$(b,b',x)$ satisfies
$$\phi_x(a,b)  = \phi_x(a,b').$$
This is because we select $a_i$ whenever $b_i\neq b_i'$.
Then, performing a union bound on \cref{eq:restrictionbenice} (and averaging
away the dependence on $b$) we have that 
\begin{equation}\label{eq:litunionbound}
\Pr_{b\sim [n]^d,b'\sim \eqtf(b),x\sim \mc{D}_{bb'}} [\ang{F^a(b),x} =f_{ab}(x) = f_{ab'}(x) = \ang{F^a(b'),x}] > 1-O(\eps_{a}).
\end{equation}
For each $a,b,b'$, define
$$
\delta_{abb'} =  \Pr_{x\sim \mc{D}_{bb'}} [\ang{F^a(b)+F^a(b'), x} \neq 0].
$$
By \cref{eq:litunionbound}, $\E_{b\sim [n]^d,b'\sim \eqtf(b)}[\delta_{abb'}]< O(\eps_a)$.
Intuitively, if $\delta_{abb'}$ is small then $F^a(b),F^a(b')$ must have some ``agreement''. More precisely we have:
\begin{claim}
If $\delta_{abb'}<1/3$ , then for all $i$ where $b_i = b_i'$ we have $F^a(b)_i = F^a(b')_i$
\end{claim}
\begin{proof}
Suppose to the contrary that there is some $i\in [d]$ with $b_i= b_i'$ but $F^a(b)_i \neq F^a(b')_i$.
If $x\sim \mc{D}_{bb'}$ then there is a $1/3$ chance of $x_i = 0$ and a $2/3$ chance
of $x_i=1$, and this is independent of the values of $x_j$ for $j\neq i$. This means that $\ang{F^a(b)+F^a(b'), x}$ takes on either value in $\F_2$ with probability at least $1/3$, implying that $\delta_{abb'}\ge 1/3$.
\end{proof}
Now, by Markov's Inequality we have 
\begin{equation}\label{eq:markovnice}
\Pr_{b\sim[n]^d,b'\sim \eqtf(b)}[\delta_{abb'}\ge 1/3] \le 3\mathbb{E}_{b,b'}[\delta_{abb'}] < O(\eps_a).
\end{equation}
Given $b,b'$ let $\overline{\Delta}(b,b')$ denote the set of coordinates $i$ where $b_i=b_i'$. By \cref{eq:markovnice} we have
$$
\Pr_{b\sim[n]^d,b'\sim \eqtf(b)} [F^a(b)_i = F^a(b')_i \quad \forall i\in \overline{\Delta}(b,b')] > 1-O(\eps_a).
$$
By  \cref{fact:directproduct} we conclude that $F^a(b)$ is $O(\eps_a)$-close to a direct product. 
\end{proof}

Now we conclude the proof of \cref{thm:subcube}.
For each $a$, let $(g_1(a,b_i),\ldots, g_d(a,b_d))$ be a direct product that is
close $O(\eps_a)$-close to $F^a(b)$; the existence of these direct products was
shown in \cref{lem:fabdirectprod1}. Then, 
\begin{equation}\label{eq:giabdpdp}
\Pr_{a,x,b} \left[ f_{ab}(x) = \sum_i g_i(a,b_i)x_i \right] > 1-O(\eps).
\end{equation} 
Observe that $\phi_{x}(a,b)=\phi_{\one + x}(b,a)$. Thus, by \cref{eq:giabdpdp}
and a union bound we have
\begin{equation}\label{eq:flipflip}
\Pr_{a,b,x} \left[ \sum_{i}g_{i}(a,b_{i})x_{i}+ f(a)=f_{ab}(x)=f_{ba}(x+\one)
=f(b)+\sum_{i}g_{i}(b,a_{i})(1+x_{i}) \right]>1-O(\eps).
\end{equation}
For each $a,b$, define 
$$
\lambda_{ab}=\Pr_x\left[ \sum_i (g_i(a,b_i)+g_i(b,a_i))x_i \neq \sum_i g_i(b,a_i) + f(a)+f(b) \right].
$$
By \cref{eq:flipflip}, $\E_{ab}[\lambda_{ab}]\le O(\eps)$.
If $\lambda_{ab}<1/2$  then we must have 
$$
\sum_i g_i(b,a_i)=f(a)+f(b).
$$
By Markov's inequality:
$$
\Pr_{a,b}[\lambda_{ab}\ge 1/2]\le 2\mathbb{E}_{a,b}[\lambda_{ab}]\le O(\eps).
$$
Thus, we have:
$$
\Pr_{a,b}\left[ f(a)=f(b)+\sum_i g_i(b,a_i)  \right] > 1-O(\eps).
$$
Then, by the probabilistic method there exists $\beta$ such that 
$$
\Pr_a\left[ f(a)=f(\beta)+\sum_i g_i(\beta,a_i) \right]>1-O(\eps).
$$
That is, $f$ is $O(\eps)$-close to the direct sum $a\mapsto f(\beta)+\sum_i g_i(\beta,a_i)$.
\end{proof}  
\subsection{Derandomization} \label{sec:derandomize}
In this section we show how to use \cref{thm:subcube} to obtain a more randomness-efficient direct sum tester than the Square in Cube test or Diamond test. 
In particular, we define the \defn{Diamond in Cube test} as follows: run the
Diamond test on the restriction of $f$ to a random subcube (i.e.,
$f(\phi_x(a,b))$). 

We now show that the Direct-Sum Diamond test is more randomness-efficient than
the Square in Cube test and the Diamond in Cube test for some parameter regimes.
In the following discussion we assume that $n$ is a power of two for convenience.

In the statement of tests like the Square in Cube test we say ``sample $a,b\sim
[n]^d,x,y\sim \F_2$''. This would seem to indicate that the test requires $(2+2\log_2 n)d$
bits of randomness. However, if we look inside the Square in Cube test, we see
that this is not the case. The Square in Cube test actually requires sampling
from the following distribution:
$$(\phi_0(a,b),\phi_x(a,b),\phi_y(a,b),\phi_{x+y}(a,b)),$$
where $a,b\sim [n]^d$ and $x\sim \F_2^d$.
However, sampling from this distribution does not actually require sampling
$a,b,x,y$ fully.
Instead, we can sample from the distribution as follows.
First, sample $a\sim[n]^d, x,y\sim\F_2^d$. 
Then we partially sample $b$. Note that we don't need to sample $b_i$ if
$x_i=y_i=0$, because in this case the value of $b_i$ will not be needed in any
of the query points. Thus, the average number of random bits required to create
a sample for the Square in Cube test is only $(2+1.75\log_2 n)d$.

On the surface the Diamond test might appear to be more efficient.
Rather than sampling $a,b,x,y$ it only needs to sample $a,b,x$.
However, the Diamond test actually requires more random bits than
the Square in Cube test if $n\ge 20$.
The number of random bits required to sample from the Diamond in Cube's
distribution is $(2\log_2 n + (1-1/n))d$.

Now, we claim that the Diamond in Cube test is more randomness-efficient than
both the Diamond test and the Square in Cube test, at least for large $n$. A
simple computation shows that we can obtain samples from the Diamond in Cube
test's sample distribution using only $(2.5+1.5\log_2 n)d$ random bits,
vindicating this claim.

\subsection{Direct Sum Testing in the Online Adversary Model} \label{sec:erasure}

In this section we show that an immediate application of the Affine on Subcube test from \cref{thm:subcube} yields direct sum testers that are online erasure-resilient. The $t$-online erasure model was first considered by Kalemaj, Raskhodnikova, and Varma in \cite{KRV} and was further studied in \cite{MZ, BKMR}. The properties that these works consider include linearity, low degree, and various properties of sequences. Let us now describe the model.

We are once again given query access to an input the evaluation of a function $f: [n]^d
\to \F_2$, and the goal is the same as in direct sum testing --- to
distinguish whether $f$ is a direct sum, or far from a direct sum. The twist is
that in the $t$-online erasure model, there is an adversary who is allowed to
erase any $t$ entries of $f$ after each query that the property testing
algorithm makes. If a point is queried after it is erased, then $\perp$ is
returned instead of the actual value, i.e\ the oracle will return $\perp$ instead of the actual value $f(x)$ if that query is made after the oracle erases the $x$ entry of $f$'s evaluation. It is not hard to see that neither the
Diamond in the Cube test we analyze nor the Square in the Cube test of
\cite{di19} is erasure resilient. The reason is that in both tests, after the first three queries are made, the adversary knows what the fourth query will be and can erase $f$ at that point.

However, by using the Affine on Subcube test
with an erasure resilient affinity tester, we obtain an erasure resilient direct
sum tester.

\begin{theorem}
  Fix a distance parameter $\delta > 0$ and erasure parameter $t$, and take $d$ sufficiently large relative to $n, \delta, t$. Then there is a direct sum tester for $f: [n]^d \to \F_2$ that makes $O(\max(1/\eps, \log t))$-queries in the $t$-online erasure model that satisfies:
\vspace{0.2cm}
  
  \begin{itemize}
      \item \textbf{Completeness:} If $f$ is a direct sum, the tester accepts with probability $1$.
      \item \textbf{Soundness:} If $f$ is $\delta$-far from being a direct sum, then the tester rejects with probability $2/3$.
  \end{itemize}
\end{theorem}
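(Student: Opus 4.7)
The plan is to instantiate the Affine on Subcube test of \cref{thm:subcube} with an erasure-resilient inner affinity test drawn from prior work on linearity testing in the online-erasure model. Specifically, \cite{KRV} and follow-ups supply a linearity tester that works in the $t$-online erasure model using $O(\max(1/\eps', \log t))$ queries to distinguish linear from $\eps'$-far with confidence $2/3$ (affinity testing reduces to linearity testing with one extra query). The outer tester samples $a,b\in[n]^d$ uniformly and then runs this erasure-resilient affinity tester on the restriction $x\mapsto f(\phi_x(a,b))$, viewed as a function on the induced subcube $\prod_i \{a_i,b_i\}$.

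Completeness is immediate: if $f$ is a direct sum, every restriction is affine, and an affinity tester never rejects an affine input, even under adversarial erasures. The main work is showing soundness. Supposing $f$ is $\delta$-far from a direct sum, I would invoke \cref{thm:subcube} to conclude that the (erasure-free) Affine on Subcube test, instantiated with any affinity tester whose per-input soundness satisfies $s(\eta)=\Omega(\eta)$, rejects $f$ with probability $\Omega(\delta)$. The key observation is that the erasure-resilience of the inner tester transfers directly to the composite test: every query the inner tester makes to the restriction corresponds to a single query to $f$, so the adversary's budget of $t$ erasures per query on $f$ induces at most $t$ erasures per query on the restriction. Hence the $t$-online erasure guarantee of the inner tester applies verbatim. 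An $O(1/\delta)$-fold repetition amplifies the soundness to $2/3$, and choosing $\eps'=\Theta(\delta)$ keeps the overall query count at $O(\max(1/\delta,\log t))$ (identifying the theorem's $\eps$ with the distance parameter $\delta$).

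The main obstacle is ensuring the parameters line up. One needs the effective subcube dimension $d'$ (the number of coordinates where $a$ and $b$ disagree) to be large enough both to apply the inner tester meaningfully and to prevent the adversary from saturating the subcube with erasures across all rounds. Since $d'$ concentrates sharply around $d(n-1)/n$, taking $d$ sufficiently large relative to $n, \delta, t$, as the theorem stipulates, suffices. One also needs to verify that the inner tester's soundness analysis is unaffected by conditioning on $d'$ lying in its typical range and that its erasure-resilience is robust to the adversary strategically concentrating erasures inside the random subcube, but this follows from the fact that the inner tester is already designed against a worst-case $t$-online adversary on its own domain.
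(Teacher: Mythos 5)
Your proposal matches the paper's proof essentially exactly: both instantiate the Affine on Subcube test (\cref{thm:subcube}) with an erasure-resilient inner affinity tester from the $t$-online-erasure literature and note that linearity testing in that model upgrades to affinity testing with one extra query (the paper specifically invokes Theorem~1.1 of~\cite{BKMR}, while you cite~\cite{KRV} and its follow-ups, of which~\cite{BKMR} is one). The only loose end is your final query-count accounting: $\E_{a,b}[\dist(f_{ab},\mathsf{affine})]=\Omega(\delta)$ combined with $\dist\le 1/2$ gives only a $\Omega(\delta)$-fraction of subcubes that are $\Omega(\delta)$-far, so an $O(1/\delta)$-fold repetition of a subroutine that itself makes $O(\max(1/\delta,\log t))$ queries yields $O\bigl((1/\delta)\max(1/\delta,\log t)\bigr)$ queries rather than the stated $O(\max(1/\delta,\log t))$; the paper's one-line proof does not spell this step out either, but your claim as written does not follow from the repetition argument you give.
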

\begin{proof}
    The result follows by combining \cref{thm:subcube} with Theorem 1.1
    of \cite{BKMR}. Specifically we use the Affine on Subcube test where the
    inner affinity tester is the erasure resilient one of \cite{BKMR}.
    Technically, Theorem 1.1 of \cite{BKMR} is stated for linearity testing, but
    it is straightforward to modify their result to obtain an affinity tester.
\end{proof}

\section{Reconstruction for Direct Sums}\label{sec:correction}
In this section we address another question raised in \cite{di19} on whether a test, which they call the \defn{Shapka Test} can be used to reconstruct the underlying direct sum. We show that
the this is indeed the case in some parameter regimes (i.e\ with the input function being sufficiently close to a direct sum), and then we
give an alternate method for obtaining corrected versions of the direct sum,
that is both more query efficient and more error tolerant than the Shapka test. We also give upper bounds that match up to a constant for the fraction of errors that can be tolerated when reconstructing the direct sum. This shows that the new reconstruction method we propose is essentially optimal: 
it works for essentially the entire range of $\eps$ where recovery is
information theoretically possible, and there is no
asymptotically more query-efficient correction method.

We start with our lower bounds.
\begin{proposition}\label{prop:infolb}
Fix an even $n$, arbitrary integer $d$, and some $\epsilon > 0 $ such that $n\eps>1/2$. There exist distinct direct sums
$L,Z:[n]^d\to \F_2$, and a function $f:[n]^d\to \F_2$
such that $\dist(f,L) = \dist(f, Z) \le \eps$.
\end{proposition}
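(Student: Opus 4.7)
The plan is to exhibit an explicit pair of distinct direct sums $L,Z$ whose disagreement set has fractional size $1/n$, and then build $f$ by splitting that disagreement set evenly, so that each of the two ``halfway'' distances is exactly $1/(2n)$. Since the hypothesis $n\eps > 1/2$ rewrites as $\eps > 1/(2n)$, this immediately gives $\dist(f,L) = \dist(f,Z) = 1/(2n) \le \eps$, as required.

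Concretely, I would take $L\equiv 0$ (the zero function, trivially a direct sum) and $Z(x) = \ind[x_1 = 1]$, i.e.\ the direct sum with coordinate functions $L_1(a) = \ind[a = 1]$ and $L_i \equiv 0$ for $i \ge 2$. Then $L$ and $Z$ disagree exactly on the ``slice'' $D = \set{x \in [n]^d : x_1 = 1}$, of size $|D| = n^{d-1}$. Because $n$ is even (and $d \ge 2$), $|D|$ is also even, so I can partition $D = D_0 \sqcup D_1$ into two halves of size $n^{d-1}/2$. Finally, define $f$ by setting $f(x) = L(x) = 0$ on $D_0 \cup \overline{D}$ and $f(x) = Z(x) = 1$ on $D_1$. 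Then $f$ differs from $L$ only on $D_1$ and from $Z$ only on $D_0$, giving $\dist(f,L) = \dist(f,Z) = (n^{d-1}/2)/n^d = 1/(2n)$.

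The only real subtlety is the parity condition: to make $f$ equidistant from $L$ and $Z$, one needs the disagreement set $|D|$ to be even, and this is exactly what the hypothesis that $n$ is even ensures (for $d \ge 2$; the $d = 1$ case is essentially trivial since every function $[n]\to \F_2$ is itself a direct sum). There is no substantive obstacle beyond this: a short calculation shows that $1/n$ is in fact the minimum possible fractional support size of a nonzero direct sum $M(x) = \sum_i M_i(x_i)$, attained (up to symmetry) exactly when a single $M_i$ is the indicator of one point. Hence the construction above is essentially tight, and the threshold $n\eps > 1/2$ appearing in the statement matches this tightness.
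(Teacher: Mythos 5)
Your construction is essentially identical to the paper's: both take the zero function and the indicator of a single coordinate equaling $1$ as the two direct sums (you use $x_1$, they use $x_d$), and both define $f$ by splitting the disagreement slice evenly so each distance is exactly $1/(2n) < \eps$. Your explicit remark on the parity role of $n$ being even (and the implicit $d \geq 2$ requirement) is a small clarity improvement, but the argument is the same.
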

\begin{proof}
Let $Z$ be the zero function, and let $L$ be the indicator of $x_d = 1$.
Let $f$ be any function which is zero if $x_d\neq 1$, and for inputs with
$x_d=1$, $f$ is zero half the time.
Then, $\dist(f, L) = \dist(f, Z) = 1/(2n) \le \eps$.
In other words, there is no unique closest direct sum to $f$, so decoding $f(\one)$ is undefined.
\end{proof}

\begin{proposition}\label{prop:querylb}
Fix an $n$ divisible by $4$, an arbitrary integer $d$, and $\epsilon > 0$ such that $(0.7)^{d/n}< \eps, n\eps < 1/2$. Then there is a set
$\mathcal{F}$ of functions $f: [n]^d\to \F_2$, such that
using fewer than $n/4$ queries it is impossible to
determine the value of $L(\one)$ with probability greater than $1/2$, where $L$
is the closest direct sum to $f$. 
\end{proposition}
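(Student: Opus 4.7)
The plan is to use Yao's minimax principle: construct a distribution on a family $\mc{F}$ such that every deterministic algorithm making $q < n/4$ queries outputs $L(\one)$ correctly with probability at most $1/2$. The main creative step is identifying the right corruption pattern for $f$. I will let $\tau = \lceil 2d/n \rceil$ and $C = \{x \in [n]^d : |\{i : x_i = 1\}| \ge \tau\}$. For uniform $x \in [n]^d$ the count $|\{i : x_i = 1\}|$ is $\mathrm{Bin}(d, 1/n)$ with mean $d/n$, so the multiplicative Chernoff bound yields $|C|/n^d \le (e/4)^{d/n} < (0.7)^{d/n} < \eps$. For each direct sum $L : [n]^d \to \F_2$, set $f_L(x) = L(x)$ for $x \notin C$ and $f_L(x) = 0$ for $x \in C$, and let $\mc{F} = \{f_L\}$. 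Since any two distinct direct sums differ on at least a $1/n$ fraction of inputs (vary a single coordinate where they disagree), the hypothesis $n\eps < 1/2$ combined with the triangle inequality forces $L$ to be the unique direct sum within distance $\eps$ of $f_L$, so the closest direct sum to $f_L$ is well-defined and equals $L$.

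Now draw $L$ uniformly among all direct sums (equivalently, each $L_i(j) \in \F_2$ independent uniform), making $L(\one) = \sum_i L_i(1)$ a uniform bit. Fix any deterministic algorithm making $q < n/4$ queries. Queries $x \in C$ return $0$ deterministically and place no constraint on $L$; queries $x \notin C$ return the linear functional $L(x) = \sum_i L_i(x_i)$ of the uniform independent bits. The posterior distribution of $L(\one)$ given the algorithm's transcript is uniform on $\F_2$ as long as $L \mapsto L(\one)$ is not in the $\F_2$-span of $\{L \mapsto L(x^{(t)})\}_{x^{(t)} \notin C}$, in which case the algorithm's output matches $L(\one)$ with probability exactly $1/2$.

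The main technical obstacle is ruling out this spanning possibility, which reduces to a short counting argument once set up. If $\sum_{t \in T} L(x^{(t)}) = L(\one)$ held identically in the $L_i(j)$, then comparing the coefficient of $L_i(1)$ forces that for each coordinate $i$ the value $1$ appears an odd number of times (in particular at least once) among $(x_i^{(t)})_{t \in T}$. Summing over $i$ gives $\sum_{t \in T}|\{i : x_i^{(t)} = 1\}| \ge d$, but each $x^{(t)} \notin C$ satisfies $|\{i : x_i^{(t)} = 1\}| \le \tau - 1 \le 2d/n$, so the sum is at most $q \cdot 2d/n$. This is strictly less than $d$ whenever $q < n/2$, and in particular for $q < n/4$, a contradiction. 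Thus $L(\one)$ is uniformly distributed given any transcript, and invoking Yao's principle completes the proof.
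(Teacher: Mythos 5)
Your proof is correct and follows essentially the same route as the paper's: define ``heavy'' points as those with more than roughly $2d/n$ coordinates equal to $1$, bound their measure by Chernoff, corrupt $f$ only on heavy points so that the closest direct sum is the uncorrupted $L$, and then observe that few non-heavy queries cannot ``cover'' all $d$ coordinates, leaving $L(\one)$ information-theoretically hidden. The only differences are cosmetic: you phrase the last step via Yao's principle and a linear-span condition (the query functionals $L\mapsto L(x^{(t)})$ cannot span $L\mapsto L(\one)$), whereas the paper argues more directly that some coordinate $i$ is untouched so $L_i(1)$ is unconstrained — your span condition generalizes theirs but the counting $\sum_t|\{i:x^{(t)}_i=1\}|<d$ is identical, and as you note it actually yields the slightly stronger threshold $q<n/2$.
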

\begin{proof}
Call a point $x\in [n]^d$ \defn{heavy} if more than $2d/n$ coordinates $i\in
[d]$ have $x_i=1$. By a Chernoff bound, the fraction of points in $[n]^d$ which
are heavy is less than $(0.7)^{d/n}$. Let $L$ be a uniformly random direct sum.
Define $\mathcal{F}$ to be the class of all functions which agree with $L$ on
non-heavy inputs. Then, querying $f\in \mathcal{F}$ on a heavy point is useless,
because it outputs an arbitrary value unrelated to $L$. By our assumption that
$(0.7)^{d/n}<\eps$  we have that any $f\in \mathcal{F}$ is $\eps$-close to $L$.
Now we argue that $n/4$ non-heavy queries do not suffice to determine $L(\one)$
with non-trivial probability. Indeed, there will be a coordinate $i\in [d]$ such
that among the $n/4$ non-heavy queries, no queried point $x$ has $x_i = 1$.
Thus, we cannot hope to distinguish between the equally likely cases $L_i(1)=0$
and $L_i(1)=1$, and thus cannot predict $L(\one)$ with non-trivial probability.
\end{proof}

Now we give algorithms for reconstructing values from the closest direct sum to
a function $f$. Let us start with the \defn{Shapka Test} of \cite{di19} which we now describe. This test is slightly different for odd and even $d$; for simplicity we only consider the case of odd $d$.
Let $e_i\in \{0,1\}^d$ denote a vector with zeros except at the $i$-th coordinate. We describe how the Shapka Test can be used to reconstruct a direct sum. 

\begin{test}[Shapka \cite{di19}]\label{test:shapka}
Sample $a,b\in [n]^d$. Accept iff
\begin{equation}\label{eq:shapka}
f(b) = \sum_{i=1}^d f(\phi_{e_i}(a, b)).
\end{equation}
\end{test}

Thus, to reconstruct a direct sum from a corrupted version, $f$, we can define,
\[
f_{{\sf shapka}}(b) = \maj_{a \in [n]^d} \sum_{i=1}^{d}  f(\phi_{e_i}(a, b)).
\]
Dinur and Golubev show that if $f$ passes the Shapka test with probability
$\eps$, then $f$ is $O(n\eps)$-close to a direct sum.
Now, instead of checking \cref{eq:shapka}, we use the expression on the
right hand side of \cref{eq:shapka} as our guess for the value of $f(b)$, and take the most common value at each $b$ to define the reconstructed function $f_{{\sf shapka}}$. We show that if $f$ is close enough to a direct sum, then $f_{{\sf shapka}}$ is in fact this direct sum:
\begin{proposition}
Fix $\eps>0, n\in \N$, and $d$ odd with $n\eps d < 1/4$. Suppose $f:[n]^d\to \F_2$
 is $\eps$-close to the direct sum $L=\sum_i L_i$. Then, for any $b$, 
 \[
 \Pr_{a \in [n]^d}[f(b) = L(b)] \geq \frac{3}{4}.
 \]
It follows that $f_{{\sf shapka}} = L$.
\end{proposition}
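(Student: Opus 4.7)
The plan is to analyze the Shapka estimator $S_a(b) := \sum_{i=1}^d f(\phi_{e_i}(a,b))$ directly and show that $S_a(b) = L(b)$ for at least a $3/4$ fraction of choices of $a \in [n]^d$; the claim about $f_{\sf shapka}$ then follows immediately, because $f_{\sf shapka}(b)$ is defined as the majority of $S_a(b)$ over $a$, and any value occurring with probability $\ge 3/4$ must be the majority. (I read the probability in the statement as being over the event $S_a(b) = L(b)$; the displayed $f(b) = L(b)$ is $a$-independent and appears to be a typo.)

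The first step is to verify that the Shapka expression is a correct estimator in the error-free case. Writing $L = \sum_j L_j$ and observing that $\phi_{e_i}(a,b)$ equals $a$ except with coordinate $i$ swapped to $b_i$, I would compute
$$\sum_{i=1}^d L(\phi_{e_i}(a,b)) = \sum_{i=1}^d\Bigl(L_i(b_i) + \sum_{j\ne i}L_j(a_j)\Bigr) = L(b) + (d-1)L(a).$$
Since $d$ is odd, $(d-1)L(a) = 0$ in $\F_2$, so $S_a(b) = L(b)$ whenever $f$ and $L$ agree on each of the $d$ points $\phi_{e_i}(a,b)$.

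The second step is a union bound over the $d$ query points. For fixed $b$ and fixed $i$, as $a$ varies uniformly over $[n]^d$ the point $\phi_{e_i}(a,b)$ is uniform over the codimension-one slice $\{x \in [n]^d : x_i = b_i\}$. Since this slice has relative measure $1/n$ in $[n]^d$ and the total $\dist(f,L) \le \eps$ corresponds to at most $\eps n^d$ disagreements, the fraction of disagreements sitting in this particular slice is at most $n\eps$. A union bound over $i \in [d]$ then yields
$$\Pr_a\bigl[\exists\, i\in [d]:\ f(\phi_{e_i}(a,b)) \neq L(\phi_{e_i}(a,b))\bigr] \le dn\eps < 1/4,$$
and on the complementary event Step 1 gives $S_a(b) = L(b)$. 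This gives the $3/4$ bound and finishes the proof.

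The only genuinely substantive step is Step 2, where one must notice that conditioning on a single coordinate can inflate the local error rate by a factor of $n$ relative to the global rate $\eps$; this worst-case slice concentration is precisely what forces the hypothesis into the form $n\eps d < 1/4$. Everything else is a short $\F_2$ calculation plus a union bound.
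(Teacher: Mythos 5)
Your proof is correct and follows essentially the same approach as the paper: bounding the conditional error rate on each codimension-one slice $\{x : x_i = b_i\}$ by $n\eps$ and union-bounding over $i \in [d]$. You correctly identify that the displayed event in the proposition should be $S_a(b) = L(b)$ rather than $f(b) = L(b)$ (a typo in the paper), and you spell out the $\F_2$ computation showing the estimator is exact when all $d$ queries succeed, which the paper leaves implicit.
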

\begin{proof}
We say that a query $f(x)$
\defn{succeeds} if $f(x)=L(x)$.
Let $\eps_i = \Pr_x[f(x)=L(x)\mid x_i = b_i]$.
By a union bound, the Shapka test succeeds with probability at least $1-\sum_i \eps_i$.
For each $i$, 
$$\eps=\Pr[f(x)\neq L(x)]\ge \Pr[f(x)\neq L(x)\land x_i = b_i]=\eps_i/n.$$
Thus, $\sum_i \eps_i \le n\eps d \le 1/4,$ so the Shapka test produces the correct value with probability at least $3/4$.
\end{proof}

Now we give a more query efficient method for reconstructing a direct sum. Our method has asymptotically optimal query complexity (by \cref{prop:querylb}),
and still works at essentially all possible values of $\eps$ for which the
function is guaranteed to be information theoretically recoverable (by \cref{prop:infolb}). We reconstruct the direct sum via the following voting scheme, where we call the reconstructed function $f_{{\sf recon}}$. To make things more convenient, we assume that $d$ is odd. Set 
\[
f_{{\sf recon}}(x) = \maj_{x^{(1)}, \ldots, x^{(n)}}\sum_{i = 1}^n x^{(i)},
\]
where the set of queried points $x^{(1)}, \ldots, x^{(n)}$ is weighted in the majority according to the following distribution.

\vspace{0.1cm}

\begin{itemize}
    \item Randomly partition  $[d]$ into $n$ parts, $S_1 \sqcup \cdots \sqcup S_n = [d]$, by choosing putting each $i\in [d]$ independently and uniformly at random in a part $S_i$.
    \item Sample $R$  uniformly from $([n]\setminus\set{1})^{d}$.
    \item For $i\in [n]$, form query point $x^{(i)}\in [n]^d$ by setting $x^{(i)}_j=1$ for each $j\in S_i$, and setting $x^{(i)}_j = R_j$ otherwise.
\end{itemize}

For even, $n$, we require $n+1$ queries, and use the following voting scheme. 

\[
f_{{\sf recon}}(x) = \maj_{x^{(1)}, \ldots, x^{(n)}}\sum_{i = 1}^n x^{(i)},
\]
where the set of queried points $x^{(1)}, \ldots, x^{(n+1)}$ is weighted in the majority according to the following distribution.

\vspace{0.1cm}

\begin{itemize}
    \item Randomly partition  $[d]$ into $n+1$ parts, $S_1 \sqcup \cdots \sqcup S_{n+1} = [d]$, by choosing putting each $i\in [d]$ independently and uniformly at random in a part $S_i$.
    \item Sample $R\in [n]^d$ as follows: for each $i$  independently, set $R_i=1$ with probability $1/n^2$ and otherwise sample $R_i$ uniformly from $[n]\setminus \set{1}$.
    \item For $i\in [n+1]$, form query point $x^{(i)}\in [n]^d$ by setting $x^{(i)}_j=1$ for each $j\in S_i$, and setting $x^{(i)}_j = R_j$ otherwise.
\end{itemize}
We prove that the scheme above correctly reconstructs the underlying direct sum for the case where $n$ is odd, as the case where $n$ is even is similar.
\begin{proposition}
Fix $n,d,\eps$. Let $f: [n]^d \to \F_2$ be $\eps$ close to a direct sum $L$,
with $(n+1)\eps < 1/4$. Then for any point $x$
\[
\Pr_{x^{(1)},\ldots, x^{(n)}}\left[\sum_{i=1}^n f(x^{(i)}) = L(x)\right] \geq \frac{3}{4}.
\]
It follows that $f_{{\sf recon}} = L$.
\end{proposition}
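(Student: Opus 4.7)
The plan is to exploit two clean properties of the randomized query set produced by the scheme. First, each individual query point $x^{(i)}$ is marginally uniformly distributed on $[n]^d$, so that $f(x^{(i)}) \neq L(x^{(i)})$ with probability exactly $\eps$. Second, $\sum_{i=1}^n L(x^{(i)}) = L(x)$ holds identically (that is, for every realization of the randomness) whenever $L$ is a direct sum --- the scheme is engineered so that the ``noise'' contributions from $R$ cancel modulo $2$. Granted (i) and (ii), a one-line union bound finishes the proof: with probability at least $1 - n\eps \ge 1 - (n+1)\eps > 3/4$ every query satisfies $f(x^{(i)}) = L(x^{(i)})$, and on this event $\sum_i f(x^{(i)}) = \sum_i L(x^{(i)}) = L(x)$. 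For the statement as literally written (with $x^{(i)}_j = 1$ on $S_i$) this reconstructs $L(\one)$; reconstructing $L(x)$ at a general $x$ reduces to this case by independently relabeling the $n$ values of each coordinate so that $1$ maps to $x_j$.

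To verify (ii), write $L = \sum_{j=1}^d L_j$ and expand
\[
\sum_{i=1}^n L(x^{(i)}) = \sum_{j=1}^d \sum_{i=1}^n L_j(x^{(i)}_j).
\]
For each coordinate $j$ there is a unique index $i^\star$ with $j \in S_{i^\star}$, contributing $L_j(1)$ to the inner sum over $i$; the remaining $n-1$ indices each give $L_j(R_j)$, for a total of $(n-1) L_j(R_j)$. Since $n$ is odd, $n-1$ is even, so the $R_j$-terms vanish over $\F_2$ and the inner sum collapses to $L_j(1)$. Summing over $j$ yields $L(\one)$, as desired.

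To verify (i), compute the marginal distribution of each coordinate $x^{(i)}_j$. The event $j \in S_i$ occurs with probability $1/n$ (since the partition places each $j$ independently into a uniform random part), in which case $x^{(i)}_j = 1$; otherwise $x^{(i)}_j = R_j$ is uniform on $[n]\setminus\{1\}$. Combining the two cases gives $x^{(i)}_j \sim \mathrm{Unif}([n])$, and the independence across $j$ of the partition membership and of the entries of $R$ implies that the full vector $x^{(i)}$ is uniformly distributed on $[n]^d$. The majority conclusion $f_{\sf recon} = L$ is then immediate, since at every point the voting procedure outputs the correct value with probability strictly greater than $1/2$.

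The main subtlety, and really the only delicate point, is the parity bookkeeping that makes (ii) work, which is why the scheme is split into the two cases $n$ odd and $n$ even. In the even-$n$ case one uses $n+1$ queries so that each $L_j(R_j)$ term appears an even number of times; the biased distribution on $R$ (setting $R_i = 1$ with probability $1/n^2$) is tuned precisely so that each $x^{(i)}$ is still marginally uniform on $[n]^d$, and the same union bound then yields probability at least $1 - (n+1)\eps > 3/4$.
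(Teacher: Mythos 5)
Your proof is correct and follows essentially the same route as the paper: each $x^{(i)}$ is marginally uniform on $[n]^d$ (so a union bound gives that all $n$ queries hit $L$ with probability at least $1-n\eps > 3/4$), and on that event the per-coordinate parity cancellation collapses $\sum_i L(x^{(i)})$ to $L(\one)$. Your coordinate-wise bookkeeping in (ii) is slightly more explicit than the paper's and correctly pins the relevant parity on $n$ (the number of parts in the partition), not on $d$; the paper's writeup contains a small slip on this point.
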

\begin{proof}
 Without loss of generality, let us assume that $x$ is the all ones point, $\mathbb{1}$.  We again say that a query $f(x)$ \defn{succeeds} if $f(x)=L(x)$.

We claim that $x^{(1)},\ldots, x^{(n)}$ are uniformly random points in $[n]^d$. 
Fix $i\in [n]$. For each $j\in [d]$,
$x^{(i)}_j$ has a $1/n$ chance of being a $1$, and a $1/n$ chance of being $j$ for
any $j\in [n]\setminus\set{1}$. Furthermore, the value of $x^{(i)}_j$ is
independent of the value of $x^{(i)}_{j'}$ for $j'\neq j$.
Thus, the probability that all $n$ queries all succeed is at least $1-n\eps \ge 3/4$ by a union bound.
If the queries all succeed, then because $d$ is odd, we have:
$$\sum_i f(x^{(i)}) = \sum_i L(x^{(i)})  = (d-1)L(R) + L(\one) = L(\one).$$
\end{proof}

\printbibliography

\end{document}